\providecommand{\algorithmname}{Algorithm}
\let\oldforeign@language\foreign@language
\DeclareRobustCommand{\foreign@language}[1]{%
	\lowercase{\oldforeign@language{#1}}}
\let\oldforeign@language\foreign@language
\DeclareRobustCommand{\foreign@language}[1]{%
	\lowercase{\oldforeign@language{#1}}}
\newcommand{\MYfooter}{\smash{
		\hfil\parbox[t][\height][t]{\textwidth}{\centering
			\thepage}\hfil\hbox{}}}
\def\ps@IEEEtitlepagestyle{%
	\def\@oddhead{\parbox[t][\height][t]{\textwidth}{\centering \scriptsize
			Personal use of this material is permitted. Permission from the author(s) and/or copyright holder(s), must be obtained for all other uses. Please contact us and provide details if you believe this document breaches copyrights.\\
			\noindent\makebox[\linewidth]{}
		}\hfil\hbox{}}%
	\def\@evenhead{\scriptsize\thepage \hfil \leftmark\mbox{}}%
	\def\@oddfoot{\parbox[t][\height][l]{\textwidth}{
			\vspace{-20pt}{\rule{\textwidth}{0.4pt}}\\ \footnotesize\underline{To cite this article:}
			{\bf{\textcolor{red}{H. A. Hashim and A. E. E. Eltoukhy "Landmark and IMU Data Fusion: Systematic Convergence Geometric Nonlinear Observer for SLAM and Velocity Bias," IEEE Transactions on Intelligent Transportation Systems, vol. 23, no. 4, pp. 3292-3301, 2022.}}} doi: \href{https://doi.org/10.1109/TITS.2020.3035550}{10.1109/TITS.2020.3035550}\\
			\noindent\makebox[\linewidth]
		}\hfil\hbox{}}%
	\def\@evenfoot{\MYfooter}}
\newtheorem{defn}{Definition}
\newtheorem{lem}{Lemma}
\newtheorem{prop}{Proposition}
\newtheorem{thm}{Theorem}
\newtheorem{rem}{Remark}
\newtheorem{assum}{Assumption}
\begin{document}
	\bstctlcite{IEEEexample:BSTcontrol}

	\title{Landmark and IMU Data Fusion: Systematic Convergence Geometric Nonlinear
	Observer for SLAM and Velocity Bias}

\author{Hashim A. Hashim$^*$\IEEEmembership{~Member, IEEE} and Abdelrahman E. E. Eltoukhy
	\thanks{This work was supported in part by Thompson Rivers University Internal research fund \# 102315.}
	\thanks{$^*$Corresponding author, H. A. Hashim is with the Department of Engineering and Applied Science, Thompson Rivers University, Kamloops, British Columbia, Canada, V2C-0C8, e-mail: h.a.hashiim@gmail.com}
	\thanks{A. E. E. Eltoukhy is with the Department of Industrial and Systems Engineering, The Hong Kong Polytechnic University, Hung Hum, 
		Hong Kong e-mail: abdelrahman.eltoukhy@polyu.edu.hk}
}

\markboth{IEEE TRANSACTIONS ON INTELLIGENT TRANSPORTATION SYSTEMS, \today}{Hashim \MakeLowercase{\textit{et al.}}: Landmark and IMU Data Fusion: Systematic Convergence Geometric Nonlinear Observer for SLAM and Velocity Bias}

\markboth{}{Hashim \MakeLowercase{\textit{et al.}}: Landmark and IMU Data Fusion: Systematic Convergence Geometric Nonlinear Observer for SLAM and Velocity Bias}

\maketitle

\begin{abstract}
Navigation solutions suitable for cases when both autonomous robot's
pose (\textit{i.e}., attitude and position) and its environment are
unknown are in great demand. Simultaneous Localization and Mapping
(SLAM) fulfills this need by concurrently mapping the environment
and observing robot's pose with respect to the map. This work proposes
a nonlinear observer for SLAM posed on the manifold of the Lie group
of $\mathbb{SLAM}_{n}\left(3\right)$, characterized by systematic
convergence, and designed to mimic the nonlinear motion dynamics of
the true SLAM problem. The system error is constrained to start within
a known large set and decay systematically to settle within a known
small set. The proposed estimator is guaranteed to achieve predefined
transient and steady-state performance and eliminate the unknown bias
inevitably present in velocity measurements by directly using measurements
of angular and translational velocity, landmarks, and information
collected by an inertial measurement unit (IMU). Experimental results
obtained by testing the proposed solution on a real-world dataset
collected by a quadrotor demonstrate the observer's ability to estimate
the six-degrees-of-freedom (6 DoF) robot pose and to position unknown
landmarks in three-dimensional (3D) space.
\end{abstract}

\begin{IEEEkeywords}
Simultaneous Localization and Mapping, Nonlinear filter for SLAM, pose, asymptotic stability, prescribed performance, adaptive estimate, feature, inertial measurement unit, IMU, SE(3), SO(3).
\end{IEEEkeywords}

\IEEEpeerreviewmaketitle{}

\section{Introduction}

\IEEEPARstart{R}{obotic} mapping and localization are fundamental navigation tasks.
The nature of the problem is conditioned by the unknown component:
robot's pose, environment map or both. Simultaneous Localization and
Mapping (SLAM) consists in concurrent mapping of the environment and
identification of the robot's pose (\textit{i.e}, attitude and position).
SLAM solutions are indispensable in absence of absolute positioning
systems, such as global positioning systems (GPS), which are unsuitable
for occluded environments. SLAM problem is generally tackled using
a set of sensor measurements acquired at the body-fixed frame of the
robot in motion. Owing to the presence of uncertain elements in the
measurements, robust observers are an absolute necessity. 

SLAM has been an active area of research over the past thirty years
\cite{montemerlo2002fastslam,mullane2011random,durrant2006simultaneous,gong2019mapping,zlotnik2018SLAM,Hashim2020SLAMIEEELetter,lee2016ground,paz2008divide,holmes2008square,dong2019novel}.
Two traditional approaches to SLAM estimation are Gaussian filters
and nonlinear observers. Variations of Gaussian filters for SLAM designed
over the past decade include FastSLAM algorithm based on scalable
approach \cite{montemerlo2002fastslam}, MonoSLAM algorithm on single
camera \cite{holmes2008square}, iSAM that considers fast incremental
matrix factorization \cite{kaess2008isam}, extended Kalman filter
(EKF) with consistency analysis \cite{paz2008divide,bresson2015real},
invariant EKF \cite{zhang2017EKF_SLAM}, monocular visual-inertial
state estimator \cite{qin2018vins}, particle filter \cite{sim2005vision},
and linear time-variant Kalman filter for visual SLAM \cite{lourencco2016simultaneous}.
The feature that unites all of the above-mentioned algorithms is the
probabilistic framework. The three main challenges that characterize
the SLAM estimation problem are: 1) the added complexity of robot
motion in 3D space, 2) the duality of robot's pose and map estimation,
and above all 3) high nonlinearity of the problem. The true motion
dynamics of SLAM consist of robot's pose and landmark dynamics. Pose
dynamics of a robot are framed on the Lie group of the special Euclidean
group $\mathbb{SE}\left(3\right)$. Robot's orientation, also referred
to as attitude, is a fundamental part of landmark dynamics that belongs
to the Special Orthogonal Group $\mathbb{SO}\left(3\right)$. Due
to the fact that Gaussian filters fail to fully capture the true nonlinearity
of the SLAM estimation problem, nonlinear observers are deemed more
suitable.

Owing to the nonlinearity of the attitude and pose dynamics modeled
on $\mathbb{SO}\left(3\right)$ and $\mathbb{SE}\left(3\right)$,
respectively, over the past ten years several nonlinear observers
have been developed on $\mathbb{SO}\left(3\right)$ \cite{grip2012attitude,hashim2018SO3Stochastic,hashim2019SO3Wiley}
and $\mathbb{SE}\left(3\right)$ \cite{hashim2019SE3Det,zlotnik2018higher,hashim2020SE3Stochastic}.
Consequently, nonlinear observers developed on $\mathbb{SE}\left(3\right)$
have been found suitable in application to the SLAM problem \cite{strasdat2012local}.
The duality and nonlinearity aspects of the SLAM problem have been
tackled using a nonlinear observer for pose estimation and a Kalman
filter for feature estimation \cite{johansen2016globally}. However,
the work in \cite{johansen2016globally} did not fully capture the
true nonlinearity of the SLAM problem. To address this shortcoming,
a nonlinear observer for SLAM able to handle unknown bias attached
to velocity measurements has been proposed in \cite{zlotnik2018SLAM,Hashim2020SLAMIEEELetter}.
Despite significant progress, none of the above-mentioned observers
incorporated a measure that would guarantee error convergence for
the transient and steady-state performance. Systematic convergence
can be achieved and controlled by means of a prescribed performance
function (PPF) \cite{bechlioulis2008robust}. PPF entails trapping
the error to start within a predefined large set and reduce systematically
and smoothly to stay among predefined small set \cite{bechlioulis2008robust,hashim2017neuro,hashim2019SO3Wiley,hu2019lane}.
The error is constrained by dynamically reducing boundaries which
is achieved by employing its unconstrained form termed transformed
error. 

It must be emphasized that the SLAM problem is nonlinear and is posed
on the Lie group of $\mathbb{SLAM}_{n}\left(3\right)$. Taking into
consideration the shortcomings of the previously proposed observers
for SLAM, this work proposes a nonlinear observer that directly incorporates
measurements of angular and translational velocities as well as landmark
and IMU measurements distinguished by the following characteristics:
\begin{enumerate}
	\item[1)] The nonlinear observer for SLAM is developed directly on the Lie
	group of $\mathbb{SLAM}_{n}\left(3\right)$ mimicking the true SLAM
	dynamics with guaranteed measures of transient and steady-state performance.
	\item[2)] The error components of the Lyapunov function candidate have been
	proven to be asymptotically stable including the attitude error.
	\item[3)] The proposed observer successfully compensates for the unknown bias
	in group velocity measurements.
\end{enumerate}
Effectiveness and robustness of the proposed observer for six-degrees-of-freedom
(6 DoF) robot pose estimation and mapping of the unknown landmarks
in three-dimensional (3D) space have been confirmed experimentally
using a real-world dataset collected by an unmanned aerial vehicle.

This paper consists of six sections: Section \ref{sec:Preliminaries-and-Math}
provides a brief overview of mathematical notation and preliminaries,
and defines the Lie group of $\mathbb{SO}\left(3\right)$, $\mathbb{SE}\left(3\right)$,
and $\mathbb{SLAM}_{n}\left(3\right)$. Section \ref{sec:SE3_Problem-Formulation}
presents the true SLAM problem, the set of available measurements,
and error criteria. Section \ref{sec:SLAM_Filter} presents the concept
of PPF, mapping of the error to transformed error, and proposes the
nonlinear observer for SLAM on $\mathbb{SLAM}_{n}\left(3\right)$.
Section \ref{sec:SE3_Simulations} presents the obtained results.
Section \ref{sec:SE3_Conclusion} summarizes the work.

\section{Preliminaries of $\mathbb{SLAM}_{n}\left(3\right)$ \label{sec:Preliminaries-and-Math}}

The sets of real numbers, nonnegative real numbers, $n$-dimensional
Euclidean space, and $n$-by-$m$ dimensional space are denoted by
$\mathbb{R}$, $\mathbb{R}_{+}$, $\mathbb{R}^{n}$, and $\mathbb{R}^{n\times m}$,
respectively. $\left\Vert x\right\Vert =\sqrt{x^{\top}x}$ is an Euclidean
norm for $x\in\mathbb{R}^{n}$. $\mathbf{I}_{n}$ and $\underline{\mathbf{0}}_{n}$
represent an $n$-dimensional identity matrix and a zero column vector,
respectively. $\left\{ \mathcal{I}\right\} $ represents an inertial-fixed-frame
of reference, while $\left\{ \mathcal{B}\right\} $ represents a body-fixed-frame.
The Special Orthogonal Group $\mathbb{SO}\left(3\right)$ is defined
by
\[
\mathbb{SO}\left(3\right)=\left\{ \left.R\in\mathbb{R}^{3\times3}\right|RR^{\top}=R^{\top}R=\mathbf{I}_{3}\text{, }{\rm det}\left(R\right)=+1\right\} 
\]
where ${\rm det\left(\cdot\right)}$ refers to a determinant. Special
Euclidean Group $\mathbb{SE}\left(3\right)$ is described by
\[
\mathbb{SE}\left(3\right)=\left\{ \left.\boldsymbol{T}=\left[\begin{array}{cc}
R & P\\
\underline{\mathbf{0}}_{3}^{\top} & 1
\end{array}\right]\in\mathbb{R}^{4\times4}\right|R\in\mathbb{SO}\left(3\right),P\in\mathbb{R}^{3}\right\} 
\]
where $P\in\mathbb{R}^{3}$ refers to position and $R\in\mathbb{SO}\left(3\right)$
refers to orientation, commonly termed attitude, which together constitute
a homogeneous transformation matrix \cite{hashim2019SE3Det,hashim2020SE3Stochastic}
\begin{equation}
\boldsymbol{T}=\mathcal{Z}(R,P)=\left[\begin{array}{cc}
R & P\\
\underline{\mathbf{0}}_{3}^{\top} & 1
\end{array}\right]\in\mathbb{SE}\left(3\right)\label{eq:T_SLAM}
\end{equation}
that comprehensively describes the pose of a rigid-body in 3D space,
with $\underline{\mathbf{0}}_{3}$ referring to a zero column vector.
The Lie-algebra of $\mathbb{SO}\left(3\right)$ is
\[
\mathfrak{so}\left(3\right)=\left\{ \left.\left[\Omega\right]_{\times}\in\mathbb{R}^{3\times3}\right|\left[\Omega\right]_{\times}^{\top}=-\left[\Omega\right]_{\times},\Omega\in\mathbb{R}^{3}\right\} 
\]
where $\left[\Omega\right]_{\times}$ is a skew symmetric matrix and
the relevant map $\left[\cdot\right]_{\times}:\mathbb{R}^{3}\rightarrow\mathfrak{so}\left(3\right)$
is described by
\[
\left[\Omega\right]_{\times}=\left[\begin{array}{ccc}
0 & -\Omega_{3} & \Omega_{2}\\
\Omega_{3} & 0 & -\Omega_{1}\\
-\Omega_{2} & \Omega_{1} & 0
\end{array}\right]\in\mathfrak{so}\left(3\right),\hspace{1em}\Omega=\left[\begin{array}{c}
\Omega_{1}\\
\Omega_{2}\\
\Omega_{3}
\end{array}\right]
\]
For $\Omega,V\in\mathbb{R}^{3}$ one has $\left[\Omega\right]_{\times}V=\Omega\times V$
where $\times$ signifies a cross product. $\mathfrak{se}\left(3\right)$
is the Lie-algebra of $\mathbb{SE}\left(3\right)$ described by{\small{}
	\[
	\mathfrak{se}\left(3\right)=\left\{ \left[U\right]_{\wedge}\in\mathbb{R}^{4\times4}\left|\exists\Omega,V\in\mathbb{R}^{3}:\left[U\right]_{\wedge}=\left[\begin{array}{cc}
	\left[\Omega\right]_{\times} & V\\
	\underline{\mathbf{0}}_{3}^{\top} & 0
	\end{array}\right]\right.\right\} 
	\]
}where $\left[\cdot\right]_{\wedge}$ refers to a wedge operator and
the map $\left[\cdot\right]_{\wedge}:\mathbb{R}^{6}\rightarrow\mathfrak{se}\left(3\right)$
is described by
\[
\left[U\right]_{\wedge}=\left[\begin{array}{cc}
\left[\Omega\right]_{\times} & V\\
\underline{\mathbf{0}}_{3}^{\top} & 0
\end{array}\right]\in\mathfrak{se}\left(3\right),\hspace{1em}U=\left[\begin{array}{c}
\Omega\\
V
\end{array}\right]\in\mathbb{R}^{6}
\]
$\mathbf{vex}:\mathfrak{so}\left(3\right)\rightarrow\mathbb{R}^{3}$
describes the inverse mapping of $\left[\cdot\right]_{\times}$ such
that
\begin{equation}
\begin{cases}
\mathbf{vex}\left(\left[\Omega\right]_{\times}\right) & =\Omega,\hspace{1em}\forall\Omega\in\mathbb{R}^{3}\\
\left[\mathbf{vex}(\left[\Omega\right]_{\times})\right]_{\times} & =\left[\Omega\right]_{\times}\in\mathfrak{so}\left(3\right)
\end{cases}\label{eq:SLAM_VEX}
\end{equation}
$\boldsymbol{\mathcal{P}}_{a}$ denotes the anti-symmetric projection
$\boldsymbol{\mathcal{P}}_{a}:\mathbb{R}^{3\times3}\rightarrow\mathfrak{so}\left(3\right)$
where
\begin{equation}
\boldsymbol{\mathcal{P}}_{a}\left(X\right)=\frac{1}{2}(X-X^{\top})\in\mathfrak{so}\left(3\right),\hspace{1em}\forall X\in\mathbb{R}^{3\times3}\label{eq:SLAM_Pa}
\end{equation}
$\boldsymbol{\Upsilon}\left(\cdot\right)$ describes the composition
mapping $\boldsymbol{\Upsilon}=\mathbf{vex}\circ\boldsymbol{\mathcal{P}}_{a}$
as
\begin{equation}
\boldsymbol{\Upsilon}\left(X\right)=\mathbf{vex}\left(\boldsymbol{\mathcal{P}}_{a}\left(X\right)\right)\in\mathbb{R}^{3},\hspace{1em}\forall X\in\mathbb{R}^{3\times3}\label{eq:SLAM_VEX_a}
\end{equation}
For $R\in\mathbb{SO}\left(3\right)$, $\left\Vert R\right\Vert _{{\rm I}}$
describes the Euclidean distance as
\begin{equation}
\left\Vert R\right\Vert _{{\rm I}}=\frac{1}{4}{\rm Tr}\{\mathbf{I}_{3}-R\}\in\left[0,1\right]\label{eq:SLAM_Ecul_Dist}
\end{equation}
$\overset{\circ}{\mathcal{M}}$ and $\overline{\mathcal{M}}$ refer
to submanifolds of $\mathbb{R}^{4}$ described by
\begin{align*}
\overset{\circ}{\mathcal{M}} & =\left\{ \left.\overset{\circ}{x}=\left[\begin{array}{cc}
x^{\top} & 0\end{array}\right]^{\top}\in\mathbb{R}^{4}\right|x\in\mathbb{R}^{3}\right\} \\
\overline{\mathcal{M}} & =\left\{ \left.\overline{x}=\left[\begin{array}{cc}
x^{\top} & 1\end{array}\right]^{\top}\in\mathbb{R}^{4}\right|x\in\mathbb{R}^{3}\right\} 
\end{align*}
Let $\mathbb{SLAM}_{n}\left(3\right)=\mathbb{SE}\left(3\right)\times\overline{\mathcal{M}}^{n}$
be a Lie group described by
\begin{align*}
\mathbb{SLAM}_{n}\left(3\right) & =\left\{ X=(\boldsymbol{T},\overline{{\rm p}})\left|\boldsymbol{T}\in\mathbb{SE}\left(3\right),\overline{{\rm p}}\in\overline{\mathcal{M}}^{n}\right.\right\} \\
\overline{{\rm p}} & =[\overline{{\rm p}}_{1},\overline{{\rm p}}_{2},\ldots,\overline{{\rm p}}_{n}]\in\overline{\mathcal{M}}^{n}\\
\overline{\mathcal{M}}^{n} & =\overline{\mathcal{M}}\times\overline{\mathcal{M}}\times\cdots\times\overline{\mathcal{M}}
\end{align*}
where $\overline{{\rm p}}_{i}=\left[\begin{array}{cc}
{\rm p}_{i}^{\top} & 1\end{array}\right]^{\top}\in\overline{\mathcal{M}}$, for $i=1,2,\ldots,n$. $\mathfrak{slam}_{n}\left(3\right)=\mathfrak{se}\left(3\right)\times\overset{\circ}{\mathcal{M}}^{n}$
is the Lie algebra of $\mathbb{SLAM}_{n}\left(3\right)$, and the
tangent space of the identity element is described by
\begin{align*}
\mathfrak{slam}_{n}\left(3\right) & =\left\{ \mathcal{Y}=(\left[U\right]_{\wedge},\overset{\circ}{{\rm v}})\left|\left[U\right]_{\wedge}\in\mathfrak{se}\left(3\right),\overset{\circ}{{\rm v}}\in\overset{\circ}{\mathcal{M}}^{n}\right.\right\} \\
\overset{\circ}{{\rm v}} & =[\overset{\circ}{{\rm v}}_{1},\overset{\circ}{{\rm v}}_{2},\ldots,\overset{\circ}{{\rm v}}_{n}]\in\overset{\circ}{\mathcal{M}}^{n}\\
\overset{\circ}{\mathcal{M}}^{n} & =\overset{\circ}{\mathcal{M}}\times\overset{\circ}{\mathcal{M}}\times\cdots\times\overset{\circ}{\mathcal{M}}
\end{align*}
where $\overset{\circ}{{\rm v}}_{i}=\left[\begin{array}{cc}
{\rm v}_{i}^{\top} & 0\end{array}\right]^{\top}\in\overset{\circ}{\mathcal{M}}$, for $i=1,2,\ldots,n$. The identities below will be used in the
forthcoming derivations 
\begin{align}
\left[a\times b\right]_{\times}= & ba^{\top}-ab^{\top},\hspace{1em}b,a\in{\rm \mathbb{R}}^{3}\label{eq:SLAM_Identity2}\\
\left[Ra\right]_{\times}= & R\left[b\right]_{\times}R^{\top},\hspace{1em}a\in{\rm \mathbb{R}}^{3},R\in\mathbb{SO}\left(3\right)\label{eq:SLAM_Identity1}\\
{\rm Tr}\left\{ M\left[a\right]_{\times}\right\} = & {\rm Tr}\left\{ \boldsymbol{\mathcal{P}}_{a}\left(M\right)\left[a\right]_{\times}\right\} ,\hspace{1em}a\in{\rm \mathbb{R}}^{3},M\in\mathbb{R}^{3\times3}\nonumber \\
= & -2\mathbf{vex}\left(\boldsymbol{\mathcal{P}}_{a}\left(M\right)\right)^{\top}a\label{eq:SLAM_Identity6}
\end{align}

\section{SLAM Problem\label{sec:SE3_Problem-Formulation}}

\subsection{Available Measurements}

Suppose that the map contains $n$ landmarks as illustrated in Fig.
\ref{fig:SLAM}. Let $R\in\mathbb{SO}\left(3\right)$, $P\in\mathbb{R}^{3}$,
and ${\rm p}_{i}\in\mathbb{R}^{3}$ be rigid-body's orientation, rigid-body's
translation, and the $i$th landmark location in 3D space, respectively,
where $R\in\left\{ \mathcal{B}\right\} $, $P\in\left\{ \mathcal{I}\right\} $,
and ${\rm p}_{i}\in\left\{ \mathcal{I}\right\} $ for all $i=1,2,\ldots,n$.
SLAM problem considers a completely unknown environment, and therefore,
involves simultaneous estimation of 1) robot's pose $\boldsymbol{T}\in\mathbb{SE}\left(3\right)$
(position and orientation), and 2) landmark positions $\overline{{\rm p}}=[\overline{{\rm p}}_{1},\overline{{\rm p}}_{2},\ldots,\overline{{\rm p}}_{n}]\in\overline{\mathcal{M}}^{n}$
with the aid of a set of measurements. Fig. \ref{fig:SLAM} presents
a schematic depiction of the SLAM estimation problem. 
\begin{figure}
	\centering{}\includegraphics[scale=0.45]{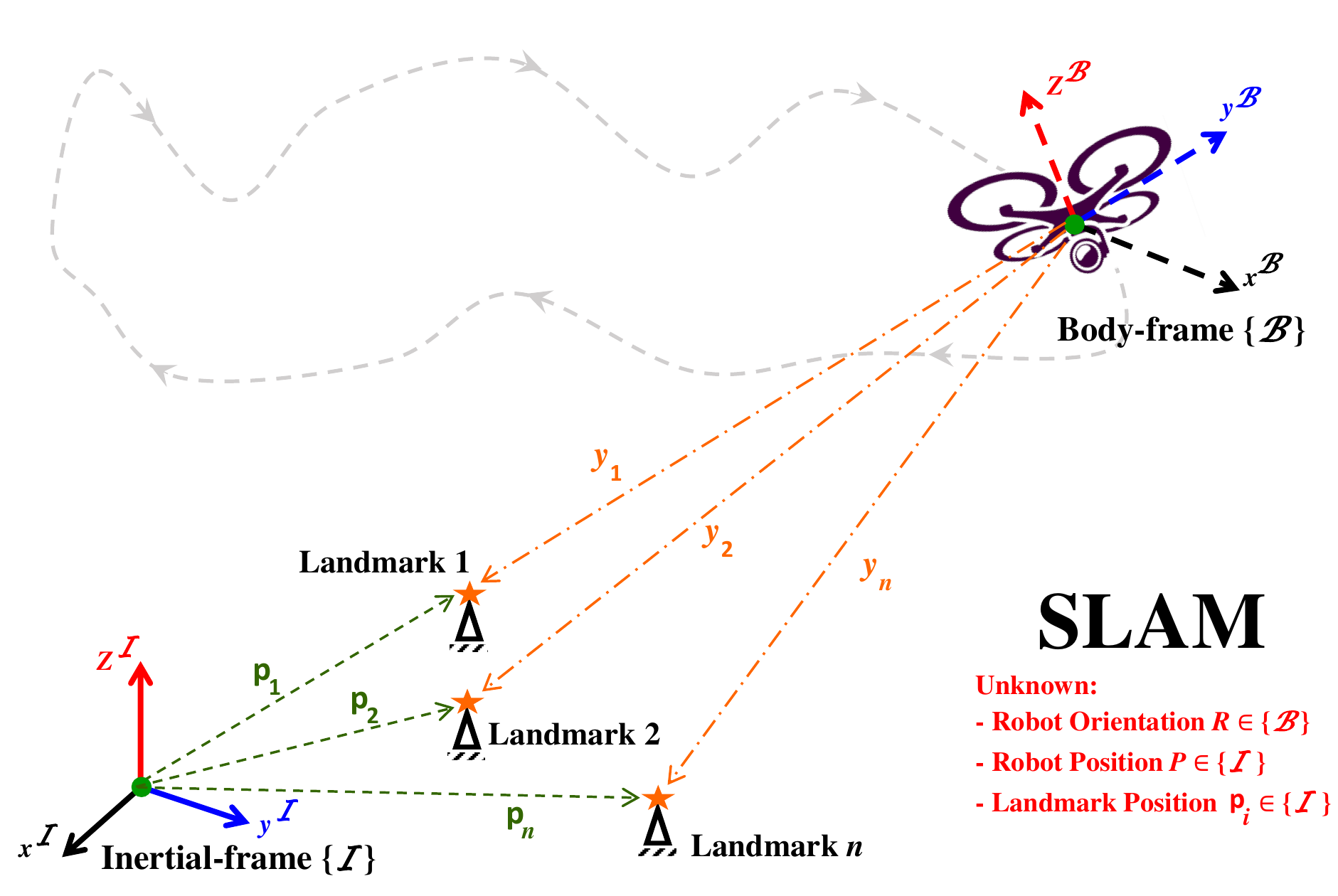}\caption{Graphical illustration of SLAM estimation problem.}
	\label{fig:SLAM}
\end{figure}

From \eqref{eq:T_SLAM}, define $X=(\boldsymbol{T},\overline{{\rm p}})\in\mathbb{SLAM}_{n}\left(3\right)$
as an unknown combination of the true rigid-body's pose and the true
landmark positions. Define $\mathcal{Y}=[\left[U\right]_{\wedge},\overset{\circ}{{\rm v}}]\in\mathfrak{slam}_{n}\left(3\right)$
as the true group velocity where $\overset{\circ}{{\rm v}}=[\overset{\circ}{{\rm v}}_{1},\overset{\circ}{{\rm v}}_{2},\ldots,\overset{\circ}{{\rm v}}_{n}]\in\overset{\circ}{\mathcal{M}}^{n}$
and $U,\overset{\circ}{{\rm v}}\in\left\{ \mathcal{B}\right\} $.
Note that the available measurement $\mathcal{Y}$ is assumed to be
both continuous and bounded. Define $U=\left[\Omega^{\top},V^{\top}\right]^{\top}\in\mathbb{R}^{6}$
as a group velocity vector expressed relative to the body-frame, with
$\Omega\in\mathbb{R}^{3}$ and $V\in\mathbb{R}^{3}$ describing the
true angular and translational velocity, respectively, and ${\rm v}_{i}\in\mathbb{R}^{3}$
describing the $i$th linear velocity of the $i$th landmark with
respect to the body-frame for all $\Omega,V,{\rm v}_{i}\in\left\{ \mathcal{B}\right\} $.
This paper concerns fixed landmarks, signifying that $\dot{{\rm p}}_{i}=\underline{\mathbf{0}}_{3}$,
and therefore, linear velocity of a landmark expressed in the body-frame
as ${\rm v}_{i}=\underline{\mathbf{0}}_{3}$ $\forall i=1,2,\ldots,n$.
The measurements of angular and translational velocities $U_{m}=\left[\Omega_{m}^{\top},V_{m}^{\top}\right]^{\top}\in\mathbb{R}^{6}$
are defined by
\begin{equation}
\begin{cases}
\Omega_{m} & =\Omega+b_{\Omega}\in\mathbb{R}^{3}\\
V_{m} & =V+b_{V}\in\mathbb{R}^{3}
\end{cases}\label{eq:SLAM_TVelcoity}
\end{equation}
with $b_{\Omega}$ and $b_{V}$ being unknown constant bias attached
to the measurements of $\Omega$ and $V$, respectively. Let us define
$b_{U}=\left[b_{\Omega}^{\top},b_{V}^{\top}\right]^{\top}\in\mathbb{R}^{6}$
for all $U_{m},b_{U}\in\left\{ \mathcal{B}\right\} $. The body-frame
vectors relevant to the orientation determination $\overset{\circ}{a}_{j}=[a_{j}^{\top},0]^{\top}$
are described by \cite{hashim2018SO3Stochastic,hashim2019SO3Wiley}
\begin{equation}
\overset{\circ}{a}_{j}=\boldsymbol{T}^{-1}\overset{\circ}{r}_{j}+\overset{\circ}{b}_{j}^{a}+\overset{\circ}{n}_{j}^{a}\in\overset{\circ}{\mathcal{M}},\hspace{1em}j=1,2,\ldots,n_{R}\label{eq:SLAM_Vect_R}
\end{equation}
where $\overset{\circ}{r}_{j}=[r_{j}^{\top},0]^{\top}\in\left\{ \mathcal{I}\right\} $,
$\overset{\circ}{b}_{j}^{a}=[b_{j}^{a^{\top}},0]^{\top}\in\left\{ \mathcal{B}\right\} $,
and $\overset{\circ}{n}_{j}^{a}=[n_{j}^{a^{\top}},0]^{\top}\in\left\{ \mathcal{B}\right\} $
describe the $j$th known inertial-frame observation, unknown constant
bias, and unknown random noise, respectively. Also, $\boldsymbol{T}^{-1}=\mathcal{Z}(R^{\top},-R^{\top}P)$
and let $b_{j}^{a}=n_{j}^{a}=0$. Note that \eqref{eq:SLAM_Vect_R}
exemplifies measurements obtained from a low cost IMU. Both $r_{j}$
and $a_{j}$ in \eqref{eq:SLAM_Vect_R} can be normalized as shown
in \eqref{eq:SLAM_Vector_norm} and employed to establish the orientation
\begin{equation}
\upsilon_{j}^{r}=\frac{r_{j}}{\left\Vert r_{j}\right\Vert },\hspace{1em}\upsilon_{j}^{a}=\frac{a_{j}}{\left\Vert a_{j}\right\Vert }\label{eq:SLAM_Vector_norm}
\end{equation}

\begin{rem}
	\label{rem:R_Marix}\cite{hashim2018SO3Stochastic,hashim2019SO3Wiley}
	The rigid-body's orientation can be established given the availability
	of three non-collinear vectors for both $\upsilon_{j}^{r}$ and $\upsilon_{j}^{a}$.
	If $n_{R}=2$, the third vector can be defined through $\upsilon_{3}^{a}=\upsilon_{1}^{a}\times\upsilon_{2}^{a}$
	and $\upsilon_{3}^{r}=\upsilon_{1}^{r}\times\upsilon_{2}^{r}$, respectively.
\end{rem}
Given an environment with $n$ landmarks, the $i$th landmark measurement
in the body-frame $\overline{y}_{i}=[y_{i}^{\top},1]^{\top}$can be
described by
\begin{equation}
\overline{y}_{i}=\boldsymbol{T}^{-1}\overline{{\rm p}}_{i}+\overset{\circ}{b}_{i}^{y}+\overset{\circ}{n}_{i}^{y}\in\overline{\mathcal{M}},\hspace{1em}\forall i=1,2,\ldots,n\label{eq:SLAM_Vec_Landmark}
\end{equation}
with $\overset{\circ}{b}_{i}^{y}\in\overset{\circ}{\mathcal{M}}$
and $\overset{\circ}{n}_{i}^{y}\in\overset{\circ}{\mathcal{M}}$ describing
unknown constant bias and random noise, respectively.

\begin{assum}\label{Assumption:Feature}\cite{hashim2019SE3Det,hashim2020SE3Stochastic}
	It is assumed that the total number of landmarks available for measurement
	is three or more.\end{assum}

\subsection{True SLAM Kinematics and Error Criteria}

The true SLAM dynamics in \eqref{eq:SLAM_True_dot} are $X=(\boldsymbol{T},\overline{{\rm p}})\in\mathbb{SLAM}_{n}\left(3\right)$,
and the tangent space of $X$ is $\mathcal{Y}=(\left[U\right]_{\wedge},\overset{\circ}{{\rm v}})\in\mathfrak{slam}_{n}\left(3\right)$
defined by
\begin{equation}
\begin{cases}
\dot{\boldsymbol{T}} & =\boldsymbol{T}\left[U\right]_{\wedge}\\
\dot{{\rm p}}_{i} & =R{\rm v}_{i},\hspace{1em}\forall i=1,2,\ldots,n
\end{cases}\label{eq:SLAM_True_dot}
\end{equation}
which can be represented in a simplified form as 
\[
\begin{cases}
\dot{R} & =R\left[\Omega\right]_{\times}\\
\dot{P} & =RV\\
\dot{{\rm p}}_{i} & =R{\rm v}_{i},\hspace{1em}\forall i=1,2,\ldots,n
\end{cases}
\]
where $U=\left[\Omega^{\top},V^{\top}\right]^{\top}$ describes the
true group velocity vector and ${\rm v}_{i}\in\mathbb{R}^{3}$ describes
the $i$th linear velocity of ${\rm p}_{i}$ with respect to the body-frame.
Define the estimate of pose as
\[
\hat{\boldsymbol{T}}=\mathcal{Z}(\hat{R},\hat{P})=\left[\begin{array}{cc}
\hat{R} & \hat{P}\\
\underline{\mathbf{0}}_{3}^{\top} & 1
\end{array}\right]\in\mathbb{SE}\left(3\right)
\]
where $\hat{R}$ represents the estimate of true orientation and $\hat{P}$
stands for the estimate of the true position. Let $\hat{{\rm p}}_{i}$
be the estimate of the true $i$th landmark ${\rm p}_{i}$. Define
error between $\boldsymbol{T}$ and $\hat{\boldsymbol{T}}$ by
\begin{align}
\tilde{\boldsymbol{T}}=\hat{\boldsymbol{T}}\boldsymbol{T}^{-1} & =\left[\begin{array}{cc}
\hat{R} & \hat{P}\\
\underline{\mathbf{0}}_{3}^{\top} & 1
\end{array}\right]\left[\begin{array}{cc}
R^{\top} & -R^{\top}P\\
\underline{\mathbf{0}}_{3}^{\top} & 1
\end{array}\right]\nonumber \\
& =\left[\begin{array}{cc}
\tilde{R} & \tilde{P}\\
\underline{\mathbf{0}}_{3}^{\top} & 1
\end{array}\right]=\mathcal{Z}(\tilde{R},\tilde{P})\label{eq:SLAM_T_error}
\end{align}
where $\tilde{R}=\hat{R}R^{\top}$ denotes the orientation error and
$\tilde{P}=\hat{P}-\tilde{R}P$ describes the position error. The
ultimate goal of pose estimation is to asymptotically drive $\tilde{\boldsymbol{T}}\rightarrow\mathbf{I}_{4}$
and, consequently, drive $\tilde{R}\rightarrow\mathbf{I}_{3}$ and
$\tilde{P}\rightarrow\underline{\mathbf{0}}_{3}$. As such, let the
error between $\hat{{\rm p}}_{i}$ and ${\rm p}_{i}$ be defined as
\begin{equation}
\overset{\circ}{e}_{i}=\overline{\hat{{\rm p}}}_{i}-\tilde{\boldsymbol{T}}\,\overline{{\rm p}}_{i}\label{eq:SLAM_e}
\end{equation}
such that $\overline{\hat{{\rm p}}}_{i}=\left[\hat{{\rm p}}_{i}^{\top},1\right]^{\top}\in\overline{\mathcal{M}}$
and $\overset{\circ}{e}_{i}=\left[e_{i}^{\top},0\right]^{\top}\in\overset{\circ}{\mathcal{M}}$.
Based on the definition of $\tilde{\boldsymbol{T}}$ in \eqref{eq:SLAM_T_error},
$\overset{\circ}{e}_{i}=\overline{\hat{{\rm p}}}_{i}-\hat{\boldsymbol{T}}\boldsymbol{T}^{-1}\,\overline{{\rm p}}_{i}$
and, in view of \eqref{eq:SLAM_Vec_Landmark}, one has
\begin{equation}
\overset{\circ}{e}_{i}=\overline{\hat{{\rm p}}}_{i}-\hat{\boldsymbol{T}}\,\overline{y}_{i}\label{eq:SLAM_e_Final}
\end{equation}
The error in \eqref{eq:SLAM_Vec_Landmark} is represented with respect
to the estimates and vector measurements. Hence, it can be found that
$\overset{\circ}{e}_{i}=\left[\tilde{{\rm p}}_{i}^{\top}-\tilde{P}^{\top},0\right]^{\top}$where
$\tilde{{\rm p}}_{i}=\hat{{\rm p}}_{i}-\tilde{R}{\rm p}_{i}$ describes
the error of the $i$th landmark estimate and $\tilde{P}=\hat{P}-\tilde{R}P$
as defined in \eqref{eq:SLAM_T_error}.

\subsection{IMU Setup}

This work proposes an observer design directly dependent on a group
of measurements. As a result, it is essential to formulate a group
of variables in terms of vector measurements. According to the vectorial
body-frame measurements in \eqref{eq:SLAM_Vect_R} and the related
normalization in \eqref{eq:SLAM_Vector_norm}, define
\begin{equation}
M=M^{\top}=\sum_{j=1}^{n_{{\rm R}}}s_{j}\upsilon_{j}^{r}\left(\upsilon_{j}^{r}\right)^{\top},\hspace{1em}\forall j=1,2,\ldots n_{{\rm R}}\label{eq:SLAM_M}
\end{equation}
with $s_{j}\geq0$ being a constant gain describes the confidence
level of the $j$th sensor measurement. It is evident that $M$ is
symmetric. According to Remark \ref{rem:R_Marix}, it is assumed that
$n_{{\rm R}}\geq2$ such that no less than two non-collinear body-frame
measurements and inertial-frame observations are available. Consequently,
${\rm rank}(M)=3$. Describe the eigenvalues of $M$ as $\lambda_{1}$,
$\lambda_{2}$, and $\lambda_{3}$. Thus, all of $\lambda_{1}$, $\lambda_{2}$,
and $\lambda_{3}$ are positive. Define a new variable $\breve{\mathbf{M}}={\rm Tr}\left\{ M\right\} \mathbf{I}_{3}-M$
provided that ${\rm rank}(M)=3$. As a result, ${\rm rank}(\breve{\mathbf{M}})=3$
and \cite{bullo2004geometric}:
\begin{enumerate}
	\item $\breve{\mathbf{M}}$ is positive-definite.
	\item The three eigenvalues of $\breve{\mathbf{M}}$ are $\lambda_{2}+\lambda_{3}$,
	$\lambda_{3}+\lambda_{1}$, and $\lambda_{2}+\lambda_{1}$, and $\underline{\lambda}(\breve{\mathbf{M}})>0$. 
\end{enumerate}
where $\underline{\lambda}(\cdot)$ denotes the minimum eigenvalue
of a matrix. In the remainder of the paper it is considered that ${\rm rank}\left(M\right)=3$
and $s_{j}$ is selected such that $\sum_{j=1}^{n_{{\rm R}}}s_{j}=3$
for $j=1,2,\ldots n_{{\rm R}}$. This in turn implies that ${\rm Tr}\left\{ M\right\} =3$.
\begin{lem}
	\label{Lemm:SLAM_Lemma1}Let $\tilde{R}\in\mathbb{SO}\left(3\right)$,
	$M=M^{\top}\in\mathbb{R}^{3\times3}$ as in \eqref{eq:SLAM_M} with
	${\rm Tr}\{M\}=3$ and ${\rm rank}\{M\}=3$. Define $\breve{\mathbf{M}}={\rm Tr}\{M\}\mathbf{I}_{3}-M$
	with $\underline{\lambda}=\underline{\lambda}(\breve{\mathbf{M}})$
	being its minimum singular value. Then, the following holds 
	\begin{align}
	||\tilde{R}M||_{{\rm I}} & \leq\frac{2}{\underline{\lambda}}\frac{||\mathbf{vex}(\boldsymbol{\mathcal{P}}_{a}(\tilde{R}M))||^{2}}{1+{\rm Tr}\{\tilde{R}MM^{-1}\}}\label{eq:SLAM_lemm1_2}
	\end{align}
	\textbf{Proof.} See the Appendix in \cite{hashim2019SO3Wiley}.
\end{lem}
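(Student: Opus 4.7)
The plan is to exploit the angle--axis (Rodrigues) parametrization of $\tilde R$, reduce both sides of \eqref{eq:SLAM_lemm1_2} to explicit quadratic expressions in $u^{\top}\breve{\mathbf{M}} u$ and $u^{\top}\breve{\mathbf{M}}^{2} u$, and then recognise the remaining scalar inequality as a statement about the variance of the eigenvalues of $\breve{\mathbf{M}}$. The starting observation is that $MM^{-1}=\mathbf{I}_{3}$, so $1+{\rm Tr}\{\tilde R M M^{-1}\}=1+{\rm Tr}\{\tilde R\}=2(1+\cos\theta)=4\cos^{2}(\theta/2)$ if $\tilde R$ rotates by angle $\theta$ about the unit axis $u$.

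First I would write $\tilde R=\mathbf{I}_{3}+\sin\theta\,[u]_{\times}+(1-\cos\theta)[u]_{\times}^{2}$ and use $\|u\|=1$ so that $[u]_{\times}^{2}=-\mathbf{I}_{3}+uu^{\top}$. Expanding $\tilde R M$ and taking the trace: since $M=M^{\top}$, identity \eqref{eq:SLAM_Identity6} gives ${\rm Tr}\{[u]_{\times}M\}=-2\,\mathbf{vex}(\boldsymbol{\mathcal{P}}_{a}(M))^{\top}u=0$, while ${\rm Tr}\{[u]_{\times}^{2}M\}=-{\rm Tr}\{M\}+u^{\top}Mu=-3+u^{\top}Mu$. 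Combined with ${\rm Tr}\{M\}=3$ and $\breve{\mathbf{M}}=3\mathbf{I}_{3}-M$, this gives ${\rm Tr}\{\tilde R M\}=3-(1-\cos\theta)u^{\top}\breve{\mathbf{M}} u$, hence
\[
\|\tilde R M\|_{\rm I}=\tfrac{1}{4}(3-{\rm Tr}\{\tilde R M\})=\tfrac{1}{2}\sin^{2}(\theta/2)\,u^{\top}\breve{\mathbf{M}} u.
\]

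Next I would compute the anti-symmetric projection. Since $M$ is symmetric, $\boldsymbol{\mathcal{P}}_{a}([u]_{\times}M)=\tfrac{1}{2}([u]_{\times}M+M[u]_{\times})=\tfrac{1}{2}[\breve{\mathbf{M}} u]_{\times}$ (using identity \eqref{eq:SLAM_Identity2} applied to $uu^{\top}M-Muu^{\top}$ together with $[u]_{\times}u=0$ to absorb the $[u]_{\times}^{2}M$ contribution). Packaging the $\sin\theta$ and $(1-\cos\theta)$ terms, I expect
\[
\mathbf{vex}(\boldsymbol{\mathcal{P}}_{a}(\tilde R M))=\tfrac{1}{2}\bigl(\sin\theta\,\mathbf{I}_{3}+(1-\cos\theta)[u]_{\times}\bigr)\breve{\mathbf{M}} u,
\]
and, after factoring out $2\sin(\theta/2)$ from the bracket and squaring, $\|\mathbf{vex}(\boldsymbol{\mathcal{P}}_{a}(\tilde R M))\|^{2}=\sin^{2}(\theta/2)\,u^{\top}\breve{\mathbf{M}}\bigl(\mathbf{I}_{3}-\sin^{2}(\theta/2)\,uu^{\top}\bigr)\breve{\mathbf{M}} u$.

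Substituting these three identities into \eqref{eq:SLAM_lemm1_2} and cancelling $\sin^{2}(\theta/2)$, the claim reduces to the scalar inequality
\[
\underline{\lambda}\cos^{2}(\theta/2)\,u^{\top}\breve{\mathbf{M}} u \;\le\; u^{\top}\breve{\mathbf{M}}^{2}u-\sin^{2}(\theta/2)\bigl(u^{\top}\breve{\mathbf{M}} u\bigr)^{2}.
\]
To close this, I would diagonalise $\breve{\mathbf{M}}$, write $u=\sum\alpha_{i}e_{i}$ with eigenvalues $\lambda_{i}\ge\underline{\lambda}$, set $p_{i}=\alpha_{i}^{2}$ (so $\sum p_{i}=1$) and $\mu_{i}=\lambda_{i}-\underline{\lambda}\ge 0$. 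A short manipulation shows that the inequality, after using $\cos^{2}(\theta/2)=1-\sin^{2}(\theta/2)$, is equivalent to $\sum_{i}\mu_{i}^{2}p_{i}\ge\bigl(\sum_{i}\mu_{i}p_{i}\bigr)^{2}$, which is precisely Jensen's inequality (nonnegativity of the variance of $\mu$ under $p$).

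The main obstacle I anticipate is not any single identity but keeping the bookkeeping clean in the anti-symmetric-part calculation: $\tilde R M$ splits into three pieces, $M$, $\sin\theta\,[u]_{\times}M$, and $(1-\cos\theta)[u]_{\times}^{2}M$, whose symmetric/anti-symmetric decompositions have to be combined carefully using $M=M^{\top}$, identity \eqref{eq:SLAM_Identity2}, and the half-angle identities before the factored form above becomes apparent. Once that reduction is in place, the remaining step is the one-line variance argument.
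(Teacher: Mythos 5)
Your proposal is correct, and it is worth noting that the paper never proves Lemma \ref{Lemm:SLAM_Lemma1} inline --- it defers entirely to the appendix of \cite{hashim2019SO3Wiley}, where the argument proceeds along the same angle--axis lines you use. So your derivation is a correct, self-contained reconstruction rather than a genuinely different route. All three intermediate identities check out: $\|\tilde{R}M\|_{\rm I}=\tfrac{1}{2}\sin^{2}(\theta/2)\,u^{\top}\breve{\mathbf{M}}u$ (using ${\rm Tr}\{[u]_{\times}M\}=0$ for symmetric $M$ and $[u]_{\times}^{2}=-\mathbf{I}_{3}+uu^{\top}$); the vex formula $\mathbf{vex}(\boldsymbol{\mathcal{P}}_{a}(\tilde{R}M))=\tfrac{1}{2}\bigl(\sin\theta\,\mathbf{I}_{3}+(1-\cos\theta)[u]_{\times}\bigr)\breve{\mathbf{M}}u$, where the cleanest justification of the $\sin\theta$ piece is $[u]_{\times}M+M[u]_{\times}={\rm Tr}\{M\}[u]_{\times}-[Mu]_{\times}$ (valid for symmetric $M$) while identity \eqref{eq:SLAM_Identity2} handles the $(1-\cos\theta)$ piece via $\boldsymbol{\mathcal{P}}_{a}(uu^{\top}M)=\tfrac{1}{2}[(Mu)\times u]_{\times}$ and $(Mu)\times u=[u]_{\times}\breve{\mathbf{M}}u$ --- your parenthetical conflates the two pieces, but the packaged result is right; and the squared norm via $\|[u]_{\times}w\|^{2}=\|w\|^{2}-(u^{\top}w)^{2}$.

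Two small tightenings, neither a gap. First, your claim that the reduced inequality is \emph{equivalent} to Jensen is slightly loose. With $s=\sin^{2}(\theta/2)$, $q_{1}=u^{\top}\breve{\mathbf{M}}u$, $q_{2}=u^{\top}\breve{\mathbf{M}}^{2}u$, it reads $\underline{\lambda}(1-s)q_{1}\leq q_{2}-sq_{1}^{2}$, and the slack decomposes as
\[
q_{2}-sq_{1}^{2}-\underline{\lambda}(1-s)q_{1}=(1-s)\bigl(q_{2}-\underline{\lambda}q_{1}\bigr)+s\bigl(q_{2}-q_{1}^{2}\bigr),
\]
a convex combination whose $s=1$ endpoint is the variance inequality $q_{2}\geq q_{1}^{2}$ and whose $s=0$ endpoint follows from $\breve{\mathbf{M}}^{2}\succeq\underline{\lambda}\breve{\mathbf{M}}$. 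Equivalently, in your $\mu$-variables the slack equals $m_{2}-sm_{1}^{2}+(1-s)\underline{\lambda}m_{1}\geq m_{2}-m_{1}^{2}\geq0$ with $m_{1}=\sum_{i}\mu_{i}p_{i}\geq0$, $m_{2}=\sum_{i}\mu_{i}^{2}p_{i}$: Jensen \emph{implies} the inequality (with a nonnegative leftover term) and coincides with it only at $s=1$. Second, cancelling $\cos^{2}(\theta/2)$ requires $\theta\neq\pi$, i.e.\ ${\rm Tr}\{\tilde{R}\}\neq-1$; this matches the natural domain, since at $\theta=\pi$ the right-hand side of \eqref{eq:SLAM_lemm1_2} has vanishing denominator $1+{\rm Tr}\{\tilde{R}MM^{-1}\}=4\cos^{2}(\theta/2)$, and the lemma is only invoked under $\tilde{R}\notin\mathcal{U}_{s}$ (Definition \ref{def:Unstable-set}, Theorem \ref{thm:PPF}) --- but you should state this exclusion explicitly, along with the trivial $\theta=0$ case before dividing by $\sin^{2}(\theta/2)$.
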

\begin{defn}
	\label{def:Unstable-set}Define a non-attractive and forward invariant
	unstable set $\mathcal{U}_{s}\subseteq\mathbb{SO}\left(3\right)$
	as
	\begin{equation}
	\mathcal{U}_{s}=\left\{ \left.\tilde{R}\left(0\right)\in\mathbb{SO}\left(3\right)\right|{\rm Tr}\{\tilde{R}\left(0\right)\}=-1\right\} \label{eq:SO3_PPF_STCH_SET}
	\end{equation}
	$\tilde{R}\left(0\right)\in\mathcal{U}_{s}$ is possible given one
	of the following three scenarios: $\tilde{R}\left(0\right)={\rm diag}(1,-1,-1)$,
	$\tilde{R}\left(0\right)={\rm diag}(-1,1,-1)$, and $\tilde{R}\left(0\right)={\rm diag}(-1,-1,1)$.
\end{defn}
The expressions in \eqref{eq:SLAM_Vect_R} and \eqref{eq:SLAM_Vector_norm}
entail that the true normalized value of the $j$th body-frame vector
is defined by $\upsilon_{j}^{a}=R^{\top}\upsilon_{j}^{r}$. Hence,
define the estimate of the body-frame vector as
\begin{equation}
\hat{\upsilon}_{j}^{a}=\hat{R}^{\top}\upsilon_{j}^{r},\hspace{1em}\forall j=1,2,\ldots n_{{\rm R}}\label{eq:SLAM_vect_R_estimate}
\end{equation}
Define the error in pose as in \eqref{eq:SLAM_T_error} where $\tilde{R}=\hat{R}R^{\top}$.
By the virtue of the identities in \eqref{eq:SLAM_Identity1} and
\eqref{eq:SLAM_Identity2}, one has
\begin{align*}
\left[\hat{R}\sum_{j=1}^{n_{{\rm R}}}\frac{s_{j}}{2}\hat{\upsilon}_{j}^{a}\times\upsilon_{j}^{a}\right]_{\times} & =\hat{R}\sum_{j=1}^{n_{{\rm R}}}\frac{s_{j}}{2}\left(\upsilon_{j}^{a}\left(\hat{\upsilon}_{j}^{a}\right)^{\top}-\hat{\upsilon}_{j}^{a}\left(\upsilon_{j}^{a}\right)^{\top}\right)\hat{R}^{\top}\\
& =\frac{1}{2}\hat{R}R^{\top}M-\frac{1}{2}MR\hat{R}^{\top}\\
& =\boldsymbol{\mathcal{P}}_{a}(\tilde{R}M)
\end{align*}
If follows that $\mathbf{vex}(\boldsymbol{\mathcal{P}}_{a}(\tilde{R}M))$
with respect to vector measurements is as below
\begin{equation}
\boldsymbol{\Upsilon}(\tilde{R}M)=\mathbf{vex}(\boldsymbol{\mathcal{P}}_{a}(\tilde{R}M))=\hat{R}\sum_{j=1}^{n_{{\rm R}}}(\frac{s_{j}}{2}\hat{\upsilon}_{j}^{a}\times\upsilon_{j}^{a})\label{eq:SLAM_VEX_VM}
\end{equation}
Moreover, $\tilde{R}M$ can be specified in terms of vector measurements
as
\begin{equation}
\tilde{R}M=\hat{R}\sum_{j=1}^{n_{{\rm R}}}\left(s_{j}\upsilon_{j}^{a}\left(\upsilon_{j}^{r}\right)^{\top}\right)\label{eq:SLAM_RM_VM}
\end{equation}
Note that ${\rm Tr}\left\{ M\right\} =3$, and, according to the definition
in \eqref{eq:SLAM_Ecul_Dist}, the normalized Euclidean distance of
$\tilde{R}M$ in terms of vector measurements is
\begin{align}
e_{\tilde{R}}=||\tilde{R}M||_{{\rm I}} & =\frac{1}{4}{\rm Tr}\{(\mathbf{I}_{3}-\tilde{R})M\}\nonumber \\
& =\frac{1}{4}{\rm Tr}\left\{ \mathbf{I}_{3}-\hat{R}\sum_{j=1}^{n_{{\rm R}}}\left(s_{j}\upsilon_{j}^{a}\left(\upsilon_{j}^{r}\right)^{\top}\right)\right\} \nonumber \\
& =\frac{1}{4}\sum_{j=1}^{n_{{\rm R}}}\left(1-s_{j}\left(\hat{\upsilon}_{j}^{a}\right)^{\top}\upsilon_{j}^{a}\right)\label{eq:SLAM_RI_VM}
\end{align}
Also, from \eqref{eq:SLAM_Ecul_Dist}, one has 
\begin{align}
1-||\tilde{R}||_{{\rm I}} & =1-\frac{1}{4}{\rm Tr}\{\mathbf{I}_{3}-\tilde{R}\}=\frac{1}{4}(1+{\rm Tr}\{\tilde{R}\})\label{eq:SLAM_Property}
\end{align}
Accordingly, from \eqref{eq:SLAM_Property}, one obtains
\begin{align}
1-||\tilde{R}||_{{\rm I}} & =\frac{1}{4}(1+{\rm Tr}\{\tilde{R}MM^{-1}\})\label{eq:SLAM_property2}
\end{align}
From \eqref{eq:SLAM_property2} and \eqref{eq:SLAM_RM_VM}, one finds
\begin{align}
& \pi(\tilde{R},M)={\rm Tr}\{\tilde{R}MM^{-1}\}\nonumber \\
& \hspace{0.3em}={\rm Tr}\left\{ \left(\sum_{j=1}^{n_{{\rm R}}}s_{j}\upsilon_{j}^{a}\left(\upsilon_{j}^{r}\right)^{\top}\right)\left(\sum_{j=1}^{n_{{\rm R}}}s_{j}\hat{\upsilon}_{j}^{a}\left(\upsilon_{j}^{r}\right)^{\top}\right)^{-1}\right\} \label{eq:SLAM_Gamma_VM}
\end{align}
With the aim of making the observer design adaptive, consider the
estimate of the unknown bias $b_{U}$ to be $\hat{b}_{U}=\left[\hat{b}_{\Omega}^{\top},\hat{b}_{V}^{\top}\right]^{\top}$.
Define the error between $b_{U}$ and $\hat{b}_{U}$ as
\begin{equation}
\begin{cases}
\tilde{b}_{\Omega} & =b_{\Omega}-\hat{b}_{\Omega}\\
\tilde{b}_{V} & =b_{V}-\hat{b}_{V}
\end{cases}\label{eq:SLAM_b_error}
\end{equation}
where $\tilde{b}_{U}=\left[b_{\Omega}^{\top},\tilde{b}_{V}^{\top}\right]^{\top}\in\mathbb{R}^{6}$.
Prior to proceeding, it is crucial to incorporate the nonlinearity
of the true SLAM problem into the proposed observer design. Thus,
the proposed observer follows $\hat{X}=(\hat{\boldsymbol{T}},\overline{\hat{{\rm p}}})\in\mathbb{SLAM}_{n}\left(3\right)$
and its tangent space is $\hat{\mathcal{Y}}=([\hat{U}]_{\wedge},\overset{\circ}{\hat{{\rm v}}})\in\mathfrak{slam}_{n}\left(3\right)$
where $\hat{\boldsymbol{T}}\in\mathbb{SE}\left(3\right)$, $\overline{\hat{{\rm p}}}=[\overline{\hat{{\rm p}}}_{1},\ldots,\overline{\hat{{\rm p}}}_{n}]\in\overline{\mathcal{M}}^{n}$,
$\hat{U}\in\mathfrak{se}\left(3\right)$, and $\overset{\circ}{\hat{{\rm v}}}=[\overset{\circ}{\hat{{\rm v}}}_{1},\ldots,\overset{\circ}{\hat{{\rm v}}}_{n}]\in\overset{\circ}{\mathcal{M}}^{n}$
are the estimates of pose, landmark positions, group velocity, and
linear landmark velocity, respectively.

\section{Nonlinear Observer Design with Systematic Convergence \label{sec:SLAM_Filter}}

This section presents the nonlinear observer design for SLAM with
guaranteed transient and steady-state performance. The nonlinear observer
design is based on the measurements obtained from an IMU and the landmarks
within the environment. 

\subsection{Systematic Convergence\label{subsec:Systematic_Convergence}}

Recall the error in the normalized Euclidean distance $e_{\tilde{R}}$
defined in \eqref{eq:SLAM_RI_VM} and the landmark estimation $e_{i}=[e_{i,1},e_{i,2},e_{i,3}]^{\top}$
defined in \eqref{eq:SLAM_e_Final}. This subsection aims to reformulate
the SLAM estimation problem such that $e_{\tilde{R}}$ and $e_{i}$
follow the predefined transient and steady-state properties adjusted
by the user. The key objective of the reformulation consists in using
predefined reducing boundaries to fully control the process of error
initiation within a given large set and its continuous decay towards
a given small set. This can be achieved via prescribed performance
functions (PPF) \cite{bechlioulis2008robust} defined as positive,
time-decreasing, and smooth functions $\xi_{i,k}:\mathbb{R}_{+}\to\mathbb{R}_{+}$
and $\xi_{\tilde{R}}:\mathbb{R}_{+}\to\mathbb{R}_{+}$ such that
\begin{equation}
\begin{cases}
\xi_{i,k}\left(t\right) & =(\xi_{i,k}^{0}-\xi_{i,k}^{\infty})\exp(-\ell_{i,k}t)+\xi_{i,k}^{\infty}\\
\xi_{\tilde{R}}\left(t\right) & =(\xi_{\tilde{R}}^{0}-\xi_{\tilde{R}}^{\infty})\exp(-\ell_{\tilde{R}}t)+\xi_{\tilde{R}}^{\infty}
\end{cases}\label{eq:SLAM_Presc}
\end{equation}
where $\xi_{i}=[\xi_{i,1},\xi_{i,2},\xi_{i,3}]^{\top}\in\mathbb{R}^{3}$,
$\xi_{\tilde{R}}\left(0\right)=\xi_{\tilde{R}}^{0}$ and $\xi_{i}\left(0\right)=\xi_{i}^{0}=[\xi_{i,1}^{0},\xi_{i,2}^{0},\xi_{i,3}^{0}]^{\top}\in\mathbb{R}^{3}$
represent the initial value and the upper bound of $\xi_{\tilde{R}}$
and $\xi_{i}$, respectively. $\xi_{\tilde{R}}^{\infty}$ and $\xi_{i}^{\infty}=[\xi_{i,1}^{\infty},\xi_{i,2}^{\infty},\xi_{i,3}^{\infty}]^{\top}\in\mathbb{R}^{3}$
correspond to the upper bound of the small sets, $\ell_{\tilde{R}}$
and $\ell_{i}=[\ell_{i,1},\ell_{i,2},\ell_{i,3}]^{\top}\in\mathbb{R}^{3}$
signify positive constants that control the rate of convergence of
$\xi_{\tilde{R}}\left(t\right)$ and $\xi\left(t\right)$, respectively,
for all $i=1,2,\ldots,n$, and $k=1,2,3$. For simplicity, let the
subscript $\star$ denote the appropriate component $\tilde{R}$ or
$i,k$. This way, $e_{\star}\left(t\right)$ is guaranteed to obey
the predefined dynamically reducing boundaries, if one of the following
conditions is fulfilled:
\begin{align}
-\delta_{\star}\xi_{\star}\left(t\right)<e_{\star}\left(t\right)<\xi_{\star}\left(t\right), & \text{ if }e_{\star}\left(0\right)\geq0\label{eq:SLAM_ePos}\\
-\xi_{\star}\left(t\right)<e_{\star}\left(t\right)<\delta_{\star}\xi_{\star}\left(t\right), & \text{ if }e_{\star}\left(0\right)<0\label{eq:SLAM_eNeg}
\end{align}
with $\delta_{\star}\in\left[0,1\right]$. Define $e_{\star}:=e_{\star}\left(t\right)$
and $\xi_{\star}:=\xi_{\star}\left(t\right)$. The systematic convergence
of $e_{\star}$ from a known large set to a known narrow set defined
in \eqref{eq:SLAM_ePos} and \eqref{eq:SLAM_eNeg} is presented in
Fig. \ref{fig:SO3PPF_2}.

\begin{figure}[h!]
	\centering{}\includegraphics[scale=0.32]{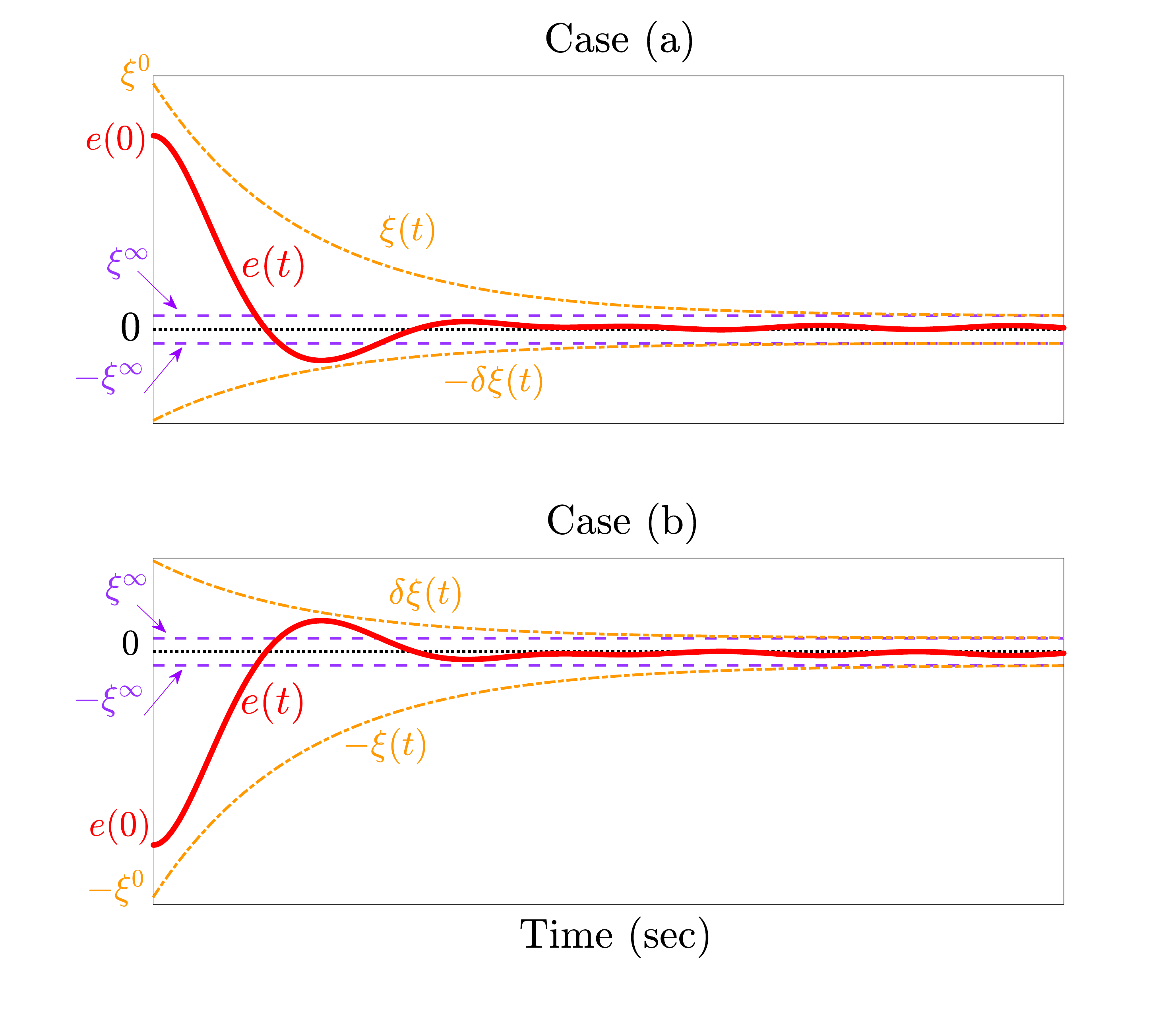} \caption{Systematic convergence of $e_{\star}$ with respect to (a) Equation
		\eqref{eq:SLAM_ePos}; (b) Equation \eqref{eq:SLAM_eNeg}.}
	\label{fig:SO3PPF_2}
\end{figure}

\begin{rem}
	\label{SE3PPF_rem3}\cite{bechlioulis2008robust,hashim2019SO3Wiley,hashim2019SE3Det,wei2018learning,wei2018novel}
	Note that specifying the upper bound as well as the sign of $e_{\star}\left(0\right)$
	is sufficient to compel the error to adhere to the performance constraints
	and obey the predefined dynamically reducing boundaries $\forall t>0$.
	Given that one of the conditions in \eqref{eq:SLAM_ePos} and \eqref{eq:SLAM_eNeg}
	is met, the maximum undershoot/overshoot will be bounded by $\pm\delta\xi_{\star}$
	and the steady-state error will be limited by $\pm\xi_{\star}^{\infty}$as
	detailed in Fig. \ref{fig:SO3PPF_2}. 
\end{rem}
Let the error $e_{\star}$ be given as
\begin{equation}
e_{\star}=\xi_{\star}\mathcal{F}(E_{\star})\label{eq:SLAM_e_Trans}
\end{equation}
where $\xi_{\star}\in\mathbb{R}$ is as expressed in \eqref{eq:SLAM_Presc},
$E_{\star}\in\mathbb{R}$ represents the unconstrained error known
as transformed error, and $\mathcal{F}(E_{\star})$ represents a smooth
function and follows Assumption \ref{Assum:SE3PPF_1}:

\begin{assum}\label{Assum:SE3PPF_1} The smooth function $\mathcal{F}(E_{\star})$
	has the following characteristics \cite{bechlioulis2008robust,hashim2019SO3Wiley,hashim2019SE3Det}: 
	\begin{enumerate}
		\item[\textbf{1.}] $\mathcal{F}(E_{\star})$ strictly increasing. 
		\item[\textbf{2.}] $\mathcal{F}(E_{\star})$ is constrained by\\
		$-\underline{\delta}_{\star}<\mathcal{F}(E_{\star})<\bar{\delta}_{\star},{\rm \text{ if }}e_{\star}\left(0\right)\geq0$\\
		$-\bar{\delta}_{\star}<\mathcal{F}(E_{\star})<\underline{\delta}_{\star},{\rm \text{ if }}e_{\star}\left(0\right)<0$
		\\
		where $\bar{\delta}_{\star}$ and $\underline{\delta}_{\star}$ are
		positive constants with $\underline{\delta}_{\star}\leq\bar{\delta}_{\star}$. 
		\item[\textbf{3.}] 
		\item[] $\left.\begin{array}{c}
		\underset{E_{\star}\rightarrow-\infty}{\lim}\mathcal{F}(E_{\star})=-\underline{\delta}_{\star}\\
		\underset{E_{\star}\rightarrow+\infty}{\lim}\mathcal{F}(E_{\star})=\bar{\delta}_{\star}
		\end{array}\right\} {\rm \text{ if }}e_{\star}\left(0\right)\geq0$\\
		$\left.\begin{array}{c}
		\underset{E_{\star}\rightarrow-\infty}{\lim}\mathcal{F}(E_{\star})=-\bar{\delta}_{\star}\\
		\underset{E_{\star}\rightarrow+\infty}{\lim}\mathcal{F}(E_{\star})=\underline{\delta}_{\star}
		\end{array}\right\} {\rm \text{ if }}e_{\star}\left(0\right)<0$ \\
		such that 
	\end{enumerate}
	\begin{equation}
	\mathcal{F}\left(E_{\star}\right)=\begin{cases}
	\frac{\bar{\delta}_{\star}\exp(E_{\star})-\underline{\delta}_{\star}\exp(-E_{\star})}{\exp(E_{\star})+\exp(-E_{\star})}, & \bar{\delta}_{\star}\geq\underline{\delta}_{\star}\text{ if }e_{\star}\left(0\right)\geq0\\
	\frac{\bar{\delta}_{\star}\exp(E_{\star})-\underline{\delta}_{\star}\exp(-E_{\star})}{\exp(E_{\star})+\exp(-E_{\star})}, & \underline{\delta}_{\star}\geq\bar{\delta}_{\star}\text{ if }e_{\star}\left(0\right)<0
	\end{cases}\label{eq:SLAM_Smooth}
	\end{equation}
	
	$E_{\star}$ can be expressed via inverse transformation of \eqref{eq:SLAM_Smooth}
	\begin{equation}
	E_{\star}(e_{\star},\xi_{\star})=\mathcal{F}^{-1}(e_{\star}/\xi_{\star})\label{eq:SLAM_Trans1}
	\end{equation}
\end{assum}

In view of \eqref{eq:SLAM_Smooth}, the inverse transformation of
$\mathcal{F}\left(E_{\star}\right)$ is given by
\begin{equation}
\begin{aligned}E_{\star}= & \frac{1}{2}\begin{cases}
\text{ln}\frac{\underline{\delta}_{\star}+e_{\star}/\xi_{\star}}{\bar{\delta}_{\star}-e_{\star}/\xi_{\star}}, & \bar{\delta}_{\star}\geq\underline{\delta}_{\star}\text{ if }e_{\star}\left(0\right)\geq0\\
\text{ln}\frac{\underline{\delta}_{\star}+e_{\star}/\xi_{\star}}{\bar{\delta}_{\star}-e_{\star}/\xi_{\star}}, & \underline{\delta}_{\star}\geq\bar{\delta}_{\star}\text{ if }e_{\star}\left(0\right)<0
\end{cases}\end{aligned}
\label{eq:SLAM_trans2}
\end{equation}

\begin{prop}
	\label{Prop:SLAM_1}Consider $e_{\tilde{R}}$ in \eqref{eq:SLAM_RI_VM}
	and $e_{i,k}$ in \eqref{eq:SLAM_e_Final}. Recall \eqref{eq:SLAM_e_Trans},
	\eqref{eq:SLAM_Smooth}, and \eqref{eq:SLAM_Trans1}, and let the
	transformed error of $E_{\tilde{R}}$ and $E_{i,k}$ be described
	as in \eqref{eq:SLAM_trans2} given that $\underline{\delta}_{\tilde{R}}=\bar{\delta}_{\tilde{R}}$
	and $\underline{\delta}_{i,k}=\bar{\delta}_{i,k}$. Then, the following
	holds: 
\end{prop}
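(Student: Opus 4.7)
The plan is to work from the general inverse transformation in \eqref{eq:SLAM_trans2} and specialize it to the symmetric case $\underline{\delta}_{\tilde{R}}=\bar{\delta}_{\tilde{R}}=:\delta_{\tilde{R}}$ and $\underline{\delta}_{i,k}=\bar{\delta}_{i,k}=:\delta_{i,k}$. Under this assumption both branches of the piecewise definition collapse to the same formula, giving, with the subscript $\star\in\{\tilde{R},(i,k)\}$,
\begin{equation*}
E_{\star}=\tfrac{1}{2}\ln\!\frac{\delta_{\star}+e_{\star}/\xi_{\star}}{\delta_{\star}-e_{\star}/\xi_{\star}},
\end{equation*}
which is well-defined so long as $|e_{\star}/\xi_{\star}|<\delta_{\star}$, i.e.\ as long as the PPF envelope \eqref{eq:SLAM_ePos}--\eqref{eq:SLAM_eNeg} is respected. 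This is the first thing I would record, and it should follow from direct substitution into \eqref{eq:SLAM_trans2}.

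The second step is to differentiate $E_{\star}$ with respect to time via the chain rule, treating $E_{\star}=\mathcal{F}^{-1}(e_{\star}/\xi_{\star})$ as a function of $e_{\star}$ and $\xi_{\star}$. Computing $\partial E_{\star}/\partial e_{\star}$ and $\partial E_{\star}/\partial \xi_{\star}$ from the simplified logarithmic form and collecting terms gives the canonical PPF identity
\begin{equation*}
\dot{E}_{\star}=\mu_{\star}\!\left(\dot{e}_{\star}-\frac{\dot{\xi}_{\star}}{\xi_{\star}}\,e_{\star}\right),
\qquad
\mu_{\star}=\frac{1}{2\xi_{\star}}\!\left(\frac{1}{\delta_{\star}+e_{\star}/\xi_{\star}}+\frac{1}{\delta_{\star}-e_{\star}/\xi_{\star}}\right).
\end{equation*}
This is the expression I would expect to see appearing as the main claim of the proposition, because it is precisely what is needed to drive an observer on the transformed error rather than the raw error. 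A small routine check confirms $\mu_{\star}>0$ whenever the PPF envelope holds, since both denominators are strictly positive on $(-\delta_{\star},\delta_{\star})$, and $\mu_{\star}$ is smooth in its arguments by Assumption~\ref{Assum:SE3PPF_1}.

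For the landmark components, $e_{i,k}$ is simply the $k$-th scalar entry of $e_{i}$ in \eqref{eq:SLAM_e_Final}, so the same scalar derivation applies component-wise; for the rotational error, $e_{\tilde{R}}=\|\tilde{R}M\|_{\mathrm{I}}\ge 0$, so only the branch $e_{\star}(0)\ge0$ of \eqref{eq:SLAM_trans2} is invoked, but the symmetric assumption still makes the two branches formally identical. I would finish by remarking that $\mathcal{F}$ is a smooth bijection (Assumption~\ref{Assum:SE3PPF_1}), so boundedness of the transformed error $E_{\star}$ is equivalent to $e_{\star}/\xi_{\star}$ staying strictly inside $(-\delta_{\star},\delta_{\star})$, which is exactly the prescribed-performance guarantee needed downstream.

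The calculation itself is mechanical; the main conceptual obstacle is keeping the two regimes $e_{\star}(0)\ge0$ and $e_{\star}(0)<0$ from producing different transformed dynamics. The symmetry assumption $\underline{\delta}_{\star}=\bar{\delta}_{\star}$ is exactly what makes this obstacle disappear, since the inverse map becomes even about $0$ in $e_{\star}/\xi_{\star}$, so I would stress this reduction at the start of the argument and then carry out the differentiation in a single unified case.
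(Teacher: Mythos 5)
Your proposal proves a different statement from the one the proposition actually asserts. In the paper, the items following ``Then, the following holds:'' are: (i) the \emph{only} admissible representation of $E_{\tilde{R}}$ is the single logarithmic branch $E_{\tilde{R}}=\frac{1}{2}\ln\frac{\bar{\delta}_{\tilde{R}}+e_{\tilde{R}}/\xi_{\tilde{R}}}{\underline{\delta}_{\tilde{R}}-e_{\tilde{R}}/\xi_{\tilde{R}}}$; (ii) $E_{\tilde{R}}>0$ for all $e_{\tilde{R}}\neq 0$ and $E_{i,k}\neq 0$ for all $e_{i,k}\neq 0$; and (iii) $E_{\star}=0$ only at $e_{\star}=0$. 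The centerpiece of your writeup --- the chain-rule identity $\dot{E}_{\star}=\Lambda_{\star}(\dot{e}_{\star}-\mu_{\star}e_{\star})$ --- is not part of the proposition at all; the paper records that identity separately, after the proposition, in \eqref{eq:SLAM_Aux1}--\eqref{eq:SLAM_Trans_dot}. Consequently, claims (ii) and (iii) are nowhere established in your argument: they follow quickly from strict monotonicity of $\mathcal{F}$ together with $E_{\star}=\frac{1}{2}\ln(\bar{\delta}_{\star}/\underline{\delta}_{\star})=0$ at $e_{\star}=0$ under the symmetry assumption, and from the fact that the ratio $(\bar{\delta}_{\star}+e_{\star}/\xi_{\star})/(\underline{\delta}_{\star}-e_{\star}/\xi_{\star})$ exceeds $1$ when $e_{\star}>0$ and differs from $1$ whenever $e_{\star}\neq 0$ --- but you never draw these conclusions, because you aimed at the dynamics instead.

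On claim (i) you do have the right idea in passing --- you note that $e_{\tilde{R}}\geq 0$ so only the $e_{\star}(0)\geq 0$ branch of \eqref{eq:SLAM_trans2} is ever invoked --- but you assert the nonnegativity without justification. It is not automatic: $\tilde{R}M$ is not a rotation, so \eqref{eq:SLAM_Ecul_Dist} does not by itself place $\Vert\tilde{R}M\Vert_{\rm I}$ in $[0,1]$. The paper's proof supplies the missing step: since $M=\sum_{j}s_{j}\upsilon_{j}^{r}(\upsilon_{j}^{r})^{\top}$ is positive definite, one has $e_{\tilde{R}}=\frac{1}{4}{\rm Tr}\{(\mathbf{I}_{3}-\tilde{R})M\}=\frac{1}{4}\sum_{j}s_{j}\bigl(1-(\upsilon_{j}^{r})^{\top}\tilde{R}\upsilon_{j}^{r}\bigr)>0$ for all $\tilde{R}\neq\mathbf{I}_{3}$, with equality only at $\tilde{R}=\mathbf{I}_{3}$, which is exactly what pins down the single branch for all $t\geq 0$. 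Your observations about the branch collapse under $\underline{\delta}_{\star}=\bar{\delta}_{\star}$ and the bijectivity of $\mathcal{F}$ are correct and are the right raw material, but as written the proposal leaves two of the three claims unproved and the third resting on an unproved inequality.
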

\begin{enumerate}
	\item[(i)] The only possible representation of $E_{\tilde{R}}$ is
	\begin{equation}
	E_{\tilde{R}}=\frac{1}{2}\text{ln}\frac{\bar{\delta}_{\tilde{R}}+e_{\tilde{R}}/\xi_{\tilde{R}}}{\underline{\delta}_{\tilde{R}}-e_{\tilde{R}}/\xi_{\tilde{R}}}\label{eq:SLAM_trans3}
	\end{equation}
	\item[(ii)] The transformed error $E_{\tilde{R}}>0\forall e_{\tilde{R}}\neq0$
	and $E_{i,k}\neq0\forall e_{i,k}\neq0$.
	\item[(iii)] $E_{\star}=0$ only at $e_{\star}=0$.
\end{enumerate}
\begin{proof}Due to the fact that $e_{\tilde{R}}=||\tilde{R}M||_{{\rm I}}=\frac{1}{4}{\rm Tr}\{(\mathbf{I}_{3}-\tilde{R})M\}$
	and $M$ is positive definite, $e_{\tilde{R}}>0\forall\tilde{R}\ensuremath{\neq}\mathbf{I}_{3}$
	and $e_{\tilde{R}}=0$ only at $\tilde{R}=\mathbf{I}_{3}$. As such,
	the upper portion of \eqref{eq:SLAM_trans2} holds $\forall t\geq0$
	which proves (i). Since Proposition \ref{Prop:SLAM_1} states that
	$\underline{\delta}_{\tilde{R}}=\bar{\delta}_{\tilde{R}}$ and $\underline{\delta}_{i,k}=\bar{\delta}_{i,k}$
	with the constraint of $e_{\tilde{R}}\leq\xi_{\tilde{R}}$ and $e_{i,k}\leq\xi_{i,k}$,
	the expression in \eqref{eq:SLAM_trans3} is $(\bar{\delta}_{\tilde{R}}+e_{\tilde{R}}/\xi_{\tilde{R}})/(\underline{\delta}_{\tilde{R}}-e_{\tilde{R}}/\xi_{\tilde{R}})\geq1\forall e_{\tilde{R}}\neq0$
	and $(\bar{\delta}_{i,k}+e_{i,k}/\xi_{i,k})/(\underline{\delta}_{i,k}-e_{i,k}/\xi_{i,k})\neq1\forall e_{i,k}\neq0$.
	Moreover, it logically follows that at $e_{\star}=0$, $E_{\star}=\frac{1}{2}\text{ln}(\bar{\delta}_{\star}/\underline{\delta}_{\star})=0$.
	This proves (ii) and (iii).\end{proof} Consider
\begin{align*}
\Lambda_{\star} & =\frac{1}{2\xi_{\star}}\frac{\partial\mathcal{F}^{-1}(e_{\star}/\xi_{\star})}{\partial(e_{\star}/\xi_{\star})}\\
& =\frac{1}{2\xi_{\star}}(\frac{1}{\underline{\delta}_{\star}+e_{\star}/\xi_{\star}}+\frac{1}{\bar{\delta}_{\star}-e_{\star}/\xi_{\star}})
\end{align*}
Define the following variables
\begin{equation}
\begin{cases}
\mu_{\tilde{R}} & =\dot{\xi}_{\tilde{R}}/\xi_{\tilde{R}}\\
\mu_{i} & ={\rm diag}\left(\dot{\xi}_{i,1}/\xi_{i,1},\dot{\xi}_{i,2}/\xi_{i,2},\dot{\xi}_{i,3}/\xi_{i,3}\right)\\
\Lambda_{i} & ={\rm diag}(\Lambda_{i,1},\Lambda_{i,2},\Lambda_{i,3})
\end{cases}\label{eq:SLAM_Aux1}
\end{equation}
for $i=1,2,\ldots,n$. Thus, one finds that the transformed error
dynamics of $E_{\tilde{R}}$ and $E_{i}=\left[E_{i,1},E_{i,2},E_{i,3}\right]^{\top}$
are as follows:
\begin{equation}
\begin{cases}
\dot{E}_{\tilde{R}} & =\Lambda_{\tilde{R}}(\dot{e}_{\tilde{R}}-\mu_{\tilde{R}}e_{\tilde{R}})\\
\dot{E}_{i} & =\Lambda_{i}(\dot{e}_{i}-\mu_{i}e_{i})
\end{cases}\label{eq:SLAM_Trans_dot}
\end{equation}

\subsection{Nonlinear Observer Design\label{subsec:Det_without_IMU}}

Consider the following nonlinear observer
\begin{equation}
\begin{cases}
\dot{\hat{\boldsymbol{T}}} & =\hat{\boldsymbol{T}}\left[U_{m}-\hat{b}_{U}-W_{U}\right]_{\wedge}\\
\dot{{\rm \hat{p}}}_{i} & =-k_{1}\Lambda_{i}E_{i}+\hat{R}\left[y_{i}\right]_{\times}W_{\Omega},\hspace{1em}i=1,2,\ldots,n\\
\dot{\hat{b}}_{\Omega} & =\frac{\Lambda_{\tilde{R}}}{2}\Gamma_{1}\hat{R}^{\top}\boldsymbol{\Upsilon}(\tilde{R}M)-\sum_{i=1}^{n}\frac{\Gamma_{2}}{\alpha_{i}}\left[y_{i}\right]_{\times}\hat{R}^{\top}\Lambda_{i}E_{i}\\
\dot{\hat{b}}_{V} & =-\sum_{i=1}^{n}\frac{\Gamma_{2}}{\alpha_{i}}\hat{R}^{\top}\Lambda_{i}E_{i}\\
\tau_{R} & =\underline{\lambda}(\breve{\mathbf{M}})\times\left(1+\pi(\tilde{R},M)\right)\\
W_{\Omega} & =\frac{k_{w}\Lambda_{\tilde{R}}-4\mu_{\tilde{R}}}{\tau_{R}}\hat{R}^{\top}\boldsymbol{\Upsilon}(\tilde{R}M)\\
W_{V} & =-\sum_{i=1}^{n}\frac{1}{\alpha_{i}}k_{2}\hat{R}^{\top}\Lambda_{i}E_{i}
\end{cases}\label{eq:SLAM_Filter}
\end{equation}
where $W_{U}=\left[W_{\Omega}^{\top},W_{V}^{\top}\right]^{\top}\in\mathbb{R}^{6}$
is a correction factor and $\hat{b}_{U}$ is an estimate of $b_{U}$.
$k_{1}$, $k_{2}$, $k_{w}$, $\Gamma=\left[\begin{array}{cc}
\Gamma_{1} & 0_{3\times3}\\
0_{3\times3} & \Gamma_{2}
\end{array}\right]$, and $\alpha_{i}$ denote positive constants. Note that $M$, $\pi(\tilde{R},M)$,
$\boldsymbol{\Upsilon}(\tilde{R}M)$, and $e_{i}$ are defined with
respect to the available measurements in \eqref{eq:SLAM_M}, \eqref{eq:SLAM_Gamma_VM},
\eqref{eq:SLAM_VEX_VM}, and \eqref{eq:SLAM_e_Final}, respectively,
for all $i=1,2,\ldots,n$.
\begin{thm}
	\label{thm:PPF}Consider the SLAM kinematics $\dot{X}=(\dot{\boldsymbol{T}},\dot{\overline{{\rm p}}})$
	in \eqref{eq:SLAM_True_dot} in combination with the measurements
	extracted from landmarks ($\overline{y}_{i}=\boldsymbol{T}^{-1}\overline{{\rm p}}_{i}$),
	inertial measurement units $\upsilon_{j}^{a}=R^{\top}\upsilon_{j}^{r}$,
	and velocity measurements ($U_{m}=U+b_{U}$) for all $i=1,2,\ldots,n$
	and $j=1,2,\ldots n_{{\rm R}}$. Let Assumption \ref{Assumption:Feature}
	hold and the observer design be as in \eqref{eq:SLAM_Filter}, supplied
	with the measurements $U_{m}$, $\upsilon_{j}^{a}$ and $\overline{y}_{i}$.
	Let the design parameters $\xi_{\star}^{0}\geq e_{\star}\left(0\right)$,
	$\underline{\delta}_{\star}=\bar{\delta}_{\star}$, $k_{1}$, $k_{2}$,
	$k_{w}$, $\Gamma$, and $\alpha_{i}$ be selected as positive constants.
	Define the following set:
	\begin{align}
	\mathcal{S}= & \left\{ \left(E_{\tilde{R}},E_{1},E_{2},\ldots,E_{n}\right)\in\mathbb{R}\times\mathbb{R}^{3}\times\mathbb{R}^{3}\times\cdots\times\mathbb{R}^{3}\right|\nonumber \\
	& \hspace{5em}\left.E_{\tilde{R}}=0,E_{i}=\underline{\mathbf{0}}_{3}\forall i=1,2,\ldots n\right\} \label{eq:SLAM_Set2}
	\end{align}
	with $E_{\tilde{R}}\left(0\right)\ensuremath{\in\mathcal{L}_{\infty}}$,
	$E_{i}\left(0\right)\ensuremath{\in\mathcal{L}_{\infty}}$ and $\tilde{R}\left(0\right)\notin\mathcal{U}_{s}$.
	Then, 1) all signals in the closed loop are bounded, 2) the error
	$(E_{\tilde{R}},E_{1},E_{2},\ldots,E_{n})$ asymptotically approaches
	$\mathcal{S}$, 3) the error $(e_{\tilde{R}},e_{1},e_{2},\ldots,e_{n})$
	asymptotically approaches the origin, 4) $\tilde{R}$ asymptotically
	approaches the attractive equilibrium point $\mathbf{I}_{3}$, and
	5) $\tilde{b}_{U}$ asymptotically approaches the origin.
\end{thm}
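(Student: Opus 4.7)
The plan is to work in the transformed-error coordinates $(E_{\tilde{R}}, E_1,\ldots,E_n,\tilde{b}_U)$ and build a Lyapunov function whose derivative is negative semi-definite by construction of the gains in \eqref{eq:SLAM_Filter}. First I would derive closed-form error kinematics. Differentiating $\tilde{R}=\hat{R}R^\top$ along \eqref{eq:SLAM_True_dot} and the first line of \eqref{eq:SLAM_Filter}, then applying identity \eqref{eq:SLAM_Identity1}, gives $\dot{\tilde{R}}=[\hat{R}(\tilde{b}_\Omega-W_\Omega)]_\times\tilde{R}$. Differentiating $e_{\tilde{R}}=\tfrac{1}{4}{\rm Tr}\{(\mathbf{I}_3-\tilde{R})M\}$ and using identity \eqref{eq:SLAM_Identity6} converts this into $\dot{e}_{\tilde{R}}=\tfrac{1}{2}\boldsymbol{\Upsilon}(\tilde{R}M)^\top\hat{R}(\tilde{b}_\Omega-W_\Omega)$. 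The landmark-error dynamics come from $e_i=\hat{{\rm p}}_i-\tilde{R}{\rm p}_i-\tilde{P}$ together with the $\dot{\hat{{\rm p}}}_i$ line in \eqref{eq:SLAM_Filter}; the $\hat{R}[y_i]_\times W_\Omega$ injection is precisely what cancels the coupling to the attitude channel once $\tilde{P}=\hat{P}-\tilde{R}P$ is differentiated. Pushing these through \eqref{eq:SLAM_Trans_dot} yields $\dot{E}_{\tilde{R}}$ and $\dot{E}_i$ in closed form.

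I would then pick the Lyapunov candidate
\[
\mathcal{L}=\tfrac{1}{2}E_{\tilde{R}}^{2}+\tfrac{1}{2}\sum_{i=1}^{n}\alpha_{i}\|E_{i}\|^{2}+\tfrac{1}{2}\tilde{b}_{U}^{\top}\Gamma^{-1}\tilde{b}_{U},
\]
use $\dot{\tilde{b}}_U=-\dot{\hat{b}}_U$ (since $b_U$ is constant), and verify that the update laws $\dot{\hat{b}}_\Omega,\dot{\hat{b}}_V$ in \eqref{eq:SLAM_Filter} are engineered exactly to annihilate the $\tilde{b}_\Omega$- and $\tilde{b}_V$-dependent cross terms in $\dot{\mathcal{L}}$. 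What remains after cancellation is a $W_\Omega$-driven term on the attitude channel and a $W_V$-driven term on each landmark channel. Substituting the explicit $W_\Omega$ from \eqref{eq:SLAM_Filter} and invoking Lemma \ref{Lemm:SLAM_Lemma1} to dominate $\mu_{\tilde{R}}\,e_{\tilde{R}}$ by $\tfrac{k_w}{\tau_R}\|\boldsymbol{\Upsilon}(\tilde{R}M)\|^{2}$ yields $\dot{\mathcal{L}}\le -c_{\tilde{R}}\Lambda_{\tilde{R}}^{2}E_{\tilde{R}}^{2}-\sum_{i}c_{i}\Lambda_{i}^{2}\|E_{i}\|^{2}\le 0$.

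From this inequality claim 1 follows immediately: all closed-loop signals are bounded, and the PPF constraint $|e_\star|<\delta_\star\xi_\star$ is preserved for all $t\ge 0$ because its violation would force $E_\star\to\infty$, contradicting $\mathcal{L}\in\mathcal{L}_\infty$. Uniform boundedness of $\dot{E}_{\tilde{R}},\dot{E}_i$ through the derived dynamics makes $\ddot{\mathcal{L}}$ bounded, so Barbalat's lemma delivers $E_{\tilde{R}}\to 0$ and $E_i\to\underline{\mathbf{0}}_{3}$, which is claim 2; applying the inverse transformation \eqref{eq:SLAM_trans3} then gives $e_{\tilde{R}},e_i\to 0$, i.e.\ claim 3. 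Claim 4 follows because $e_{\tilde{R}}\to 0$ with ${\rm rank}(M)=3$ forces $\tilde{R}\to\mathbf{I}_3$, the other critical points being confined to the non-attractive set $\mathcal{U}_s$ which the hypothesis $\tilde{R}(0)\notin\mathcal{U}_s$ rules out.

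The hardest step will be claim 5, the convergence $\tilde{b}_U\to\underline{\mathbf{0}}_{6}$. Negative semi-definiteness of $\dot{\mathcal{L}}$ only guarantees $\tilde{b}_U\in\mathcal{L}_\infty$; one must exploit the constant-bias structure $\dot{\tilde{b}}_U=0$, the asymptotic vanishing of the regressors $\Lambda_{\tilde{R}}\hat{R}^\top\boldsymbol{\Upsilon}(\tilde{R}M)$ and $\sum_i(\Gamma_2/\alpha_i)\hat{R}^\top\Lambda_iE_i$ driving $\dot{\hat{b}}_U$, and an excitation condition on the landmark and IMU directions (guaranteed by Assumption \ref{Assumption:Feature} and Remark \ref{rem:R_Marix}, i.e.\ at least three non-collinear measurements in each channel) so that the only steady-state direction in the kernel of the regressor is $\tilde{b}_U=\underline{\mathbf{0}}_{6}$. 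I expect this to require a persistence-of-excitation style argument, or equivalently a contradiction showing that any constant $\tilde{b}_U\neq\underline{\mathbf{0}}_{6}$ would prevent $\dot{E}_{\tilde{R}}$ and the $\dot{E}_i$ from simultaneously vanishing in the limit.
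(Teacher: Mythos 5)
Your overall architecture matches the paper's (error dynamics on $\mathbb{SLAM}_n(3)$, transformed-error dynamics, a composite Lyapunov function with a quadratic bias term, Barbalat's lemma, and a rank argument for the bias), but there is a genuine gap at the central step: your Lyapunov candidate $\mathcal{L}=\tfrac{1}{2}E_{\tilde{R}}^{2}+\tfrac{1}{2}\sum_{i}\alpha_{i}\|E_{i}\|^{2}+\tfrac{1}{2}\tilde{b}_{U}^{\top}\Gamma^{-1}\tilde{b}_{U}$ is incompatible with the adaptation laws, which are \emph{fixed} by \eqref{eq:SLAM_Filter} and cannot be re-engineered to suit your choice. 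Differentiating your $\mathcal{L}$ produces the attitude cross term $E_{\tilde{R}}\Lambda_{\tilde{R}}\tfrac{1}{2}\boldsymbol{\Upsilon}(\tilde{R}M)^{\top}\hat{R}\tilde{b}_{\Omega}$, carrying an extra factor $E_{\tilde{R}}$, whereas the given $\dot{\hat{b}}_{\Omega}=\tfrac{\Lambda_{\tilde{R}}}{2}\Gamma_{1}\hat{R}^{\top}\boldsymbol{\Upsilon}(\tilde{R}M)-\cdots$ contains no such factor; the cancellation you assert simply does not occur. Likewise your weighting $\tfrac{\alpha_{i}}{2}\|E_{i}\|^{2}$ produces landmark cross terms weighted by $\alpha_{i}$, while $\dot{\hat{b}}_{U}$ cancels terms weighted by $1/\alpha_{i}$. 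The paper resolves this by taking $V=E_{\tilde{R}}+\sum_{i}\tfrac{1}{4\alpha_{i}}\|E_{i}\|^{2}+\tfrac{1}{2}\tilde{b}_{U}^{\top}\Gamma^{-1}\tilde{b}_{U}$, i.e.\ \emph{linear} in $E_{\tilde{R}}$ — which is a legitimate Lyapunov function precisely because Proposition \ref{Prop:SLAM_1}(ii) guarantees $E_{\tilde{R}}\geq 0$ with equality only at $e_{\tilde{R}}=0$ (a fact your quadratic choice never exploits). Relatedly, your claimed final bound $\dot{\mathcal{L}}\le-c_{\tilde{R}}\Lambda_{\tilde{R}}^{2}E_{\tilde{R}}^{2}-\sum_{i}c_{i}\Lambda_{i}^{2}\|E_{i}\|^{2}$ is unobtainable even after fixing the cancellation: Lemma \ref{Lemm:SLAM_Lemma1} converts the attitude dissipation into a term proportional to $\xi_{\tilde{R}}\mathcal{F}(E_{\tilde{R}})$ as in \eqref{eq:SLAM_Lyap2_final}, and $\mathcal{F}$ saturates at $\bar{\delta}_{\tilde{R}}$ as $E_{\tilde{R}}\to\infty$, so no globally quadratic lower bound in $E_{\tilde{R}}$ exists; negative definiteness of the saturating form is what the paper uses, and it suffices.

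On claim 5 you are closer than you fear, and you over-engineer the remedy: no persistence-of-excitation argument is needed. The paper's route is exactly the kernel mechanism you sketch, made pointwise rather than excitation-based: once $E_{i}\to\underline{\mathbf{0}}_{3}$, $\boldsymbol{\Upsilon}(\tilde{R}M)\to\underline{\mathbf{0}}_{3}$, $W_{U}\to 0$, and $\dot{\hat{{\rm p}}}_{i}\to 0$, the limit $\dot{e}_{i}\to 0$ applied to \eqref{eq:SLAM_e_dot_Final} yields $\lim_{t\to\infty}[\begin{array}{cc}-\hat{R}[y_{i}]_{\times} & \hat{R}\end{array}]\tilde{b}_{U}=0$ for every $i$; stacking these rows into $Q\in\mathbb{R}^{3n\times 6}$, Assumption \ref{Assumption:Feature} ($n\geq 3$) makes $Q$ full column rank, so $\tilde{b}_{U}\to\underline{\mathbf{0}}_{6}$ directly — the constancy of $b_{U}$ enters only through $\dot{\tilde{b}}_{U}=-\dot{\hat{b}}_{U}$. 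Note also that the IMU channel contributes nothing to this rank argument (the $\upsilon_{j}^{a}$ measurements drive only the attitude correction), so invoking Remark \ref{rem:R_Marix} here is superfluous. Your derivation of the error kinematics, the role of the $\hat{R}[y_{i}]_{\times}W_{\Omega}$ feedthrough in $\dot{\hat{{\rm p}}}_{i}$, and the use of $W_{\Omega}$ with Lemma \ref{Lemm:SLAM_Lemma1} to dominate the vanishing $\mu_{\tilde{R}}e_{\tilde{R}}$ term are all consistent with the paper.
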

\begin{proof}In order to prove Theorem \ref{thm:PPF}, it is necessary
	to derive transformed error dynamics $\dot{E}_{\star}=\dot{E}_{\star}(e_{\star},\dot{e}_{\star})$
	in \eqref{eq:SLAM_Trans_dot}. It becomes evident that $\dot{E}_{\star}$
	is a function of the pose and the landmark error as well as the error
	dynamics. As such, the first step is to find the pose and landmark
	error dynamics. Recall the pose error $\tilde{\boldsymbol{T}}$ defined
	in \eqref{eq:SLAM_T_error}. Thus, the pose error dynamics are
	\begin{align}
	\dot{\tilde{\boldsymbol{T}}} & =\dot{\hat{\boldsymbol{T}}}\boldsymbol{T}^{-1}+\hat{\boldsymbol{T}}\dot{\boldsymbol{T}}^{-1}\nonumber \\
	& =\hat{\boldsymbol{T}}\left[U+\tilde{b}_{U}-W_{U}\right]_{\wedge}\boldsymbol{T}^{-1}-\hat{\boldsymbol{T}}\left[U\right]_{\wedge}\boldsymbol{T}^{-1}\nonumber \\
	& =\hat{\boldsymbol{T}}\left[\tilde{b}_{U}-W_{U}\right]_{\wedge}\hat{\boldsymbol{T}}^{-1}\tilde{\boldsymbol{T}}\label{eq:SLAM_T_error_dot}
	\end{align}
	where $\boldsymbol{\dot{T}}^{-1}=-\boldsymbol{T}^{-1}\boldsymbol{\dot{T}}\boldsymbol{T}^{-1}$.
	Thereby, the error dynamics of $\overset{\circ}{e}_{i}$ in \eqref{eq:SLAM_e}
	become
	\begin{align}
	\overset{\circ}{\dot{e}}_{i} & =\overset{\circ}{\dot{\hat{{\rm p}}}}_{i}-\dot{\tilde{\boldsymbol{T}}}\,\overline{{\rm p}}_{i}-\tilde{\boldsymbol{T}}\,\dot{\overline{{\rm p}}}_{i}\nonumber \\
	& =\overset{\circ}{\dot{\hat{{\rm p}}}}_{i}-\hat{\boldsymbol{T}}\left[\tilde{b}_{U}-W_{U}\right]_{\wedge}\hat{\boldsymbol{T}}^{-1}\tilde{\boldsymbol{T}}\,\overline{{\rm p}}_{i}\label{eq:SLAM_e_dot}
	\end{align}
	The expression $\hat{\boldsymbol{T}}\left[\tilde{b}_{U}\right]_{\wedge}\hat{\boldsymbol{T}}^{-1}$
	in \eqref{eq:SLAM_T_error_dot} can be reformulated as
	\begin{align}
	\hat{\boldsymbol{T}}\left[\tilde{b}_{U}\right]_{\wedge}\hat{\boldsymbol{T}}^{-1} & =\left[\begin{array}{cc}
	\hat{R}[\tilde{b}_{\Omega}]_{\times}\hat{R}^{\top} & \hat{R}\tilde{b}_{V}-\hat{R}[\tilde{b}_{\Omega}]_{\times}\hat{R}^{\top}\hat{P}\\
	\underline{\mathbf{0}}_{3}^{\top} & 0
	\end{array}\right]\nonumber \\
	& =\left[\begin{array}{c}
	\hat{R}\tilde{b}_{\Omega}\\
	\hat{R}\tilde{b}_{V}+[\hat{P}]_{\times}\hat{R}\tilde{b}_{\Omega}
	\end{array}\right]_{\wedge}\in\mathfrak{se}\left(3\right)\label{eq:SLAM_Adj_Property4}
	\end{align}
	with $[\hat{R}\tilde{b}_{\Omega}]_{\times}=\hat{R}[\tilde{b}_{\Omega}]_{\times}\hat{R}^{\top}$
	as defined in \eqref{eq:SLAM_Identity1}, which shows that the expression
	in \eqref{eq:SLAM_Adj_Property4} is equivalent to
	\begin{equation}
	\hat{\boldsymbol{T}}\left[\tilde{b}_{U}\right]_{\wedge}\hat{\boldsymbol{T}}^{-1}=\left[\left[\begin{array}{cc}
	\hat{R} & 0_{3\times3}\\
	\left[\hat{P}\right]_{\times}\hat{R} & \hat{R}
	\end{array}\right]\tilde{b}_{U}\right]_{\wedge}\label{eq:SLAM_Adj_Property5}
	\end{equation}
	Hence, in view of \eqref{eq:SLAM_Adj_Property5} and \eqref{eq:SLAM_e_dot},
	one obtains{\small{}
		\begin{align}
		\hat{\boldsymbol{T}}\left[\tilde{b}_{U}\right]_{\wedge}\hat{\boldsymbol{T}}^{-1}\tilde{\boldsymbol{T}}\,\overline{{\rm p}}_{i} & =\left[\left[\begin{array}{cc}
		\hat{R} & 0_{3\times3}\\
		\left[\hat{P}\right]_{\times}\hat{R} & \hat{R}
		\end{array}\right]\tilde{b}_{U}\right]_{\wedge}\left[\begin{array}{c}
		\hat{R}y_{i}+\hat{P}\\
		1
		\end{array}\right]\nonumber \\
		& =\left[\begin{array}{c}
		-\left[\hat{R}y_{i}\right]_{\times}\hat{R}\tilde{b}_{\Omega}+\hat{R}\tilde{b}_{V}\\
		0
		\end{array}\right]\nonumber \\
		& =\left[\begin{array}{cc}
		-\hat{R}\left[y_{i}\right]_{\times} & \hat{R}\\
		\underline{\mathbf{0}}_{3}^{\top} & \underline{\mathbf{0}}_{3}^{\top}
		\end{array}\right]\tilde{b}_{U}\label{eq:SLAM_Adj_Property6}
		\end{align}
	}From \eqref{eq:SLAM_e_dot} and \eqref{eq:SLAM_Adj_Property6}, the
	error dynamics in \eqref{eq:SLAM_e_dot} become
	\begin{align}
	\overset{\circ}{\dot{e}}_{i} & =\overset{\circ}{\dot{\hat{{\rm p}}}}_{i}-\left[\begin{array}{cc}
	-\hat{R}\left[y_{i}\right]_{\times} & \hat{R}\\
	\underline{\mathbf{0}}_{3}^{\top} & \underline{\mathbf{0}}_{3}^{\top}
	\end{array}\right]\left(\tilde{b}_{U}-W_{U}\right)\label{eq:SLAM_e_dot1}
	\end{align}
	Note that the last row of \eqref{eq:SLAM_e_dot1} is comprised of
	zeros which means that
	\begin{align}
	\dot{e}_{i} & =\dot{\hat{{\rm p}}}_{i}-[\begin{array}{cc}
	-\hat{R}\left[y_{i}\right]_{\times} & \hat{R}\end{array}](\tilde{b}_{U}-W_{U})\label{eq:SLAM_e_dot_Final}
	\end{align}
	From \eqref{eq:SLAM_T_error_dot}, the attitude error dynamics are
	\begin{align}
	\dot{\tilde{R}} & =\dot{\hat{R}}R^{\top}+\hat{R}\dot{R}^{\top}=\hat{R}\left[\tilde{b}_{\Omega}-W_{\Omega}\right]_{\times}R^{\top}\nonumber \\
	& =\left[\hat{R}(\tilde{b}_{\Omega}-W_{\Omega})\right]_{\times}\tilde{R}\label{eq:SLAM_Attit_Error_dot-1}
	\end{align}
	where the identity in \eqref{eq:SLAM_Identity1} was utilized. In
	view of \eqref{eq:SLAM_Ecul_Dist}, $e_{\tilde{R}}=\frac{1}{4}{\rm Tr}\left\{ (\mathbf{I}_{3}-\tilde{R})M\right\} $.
	As such, with the aid of \eqref{eq:SLAM_Identity6} one has
	\begin{align}
	\dot{e}_{\tilde{R}}= & -\frac{1}{4}{\rm Tr}\left\{ \left[\hat{R}(\tilde{b}_{\Omega}-W_{\Omega})\right]_{\times}\tilde{R}M\right\} \nonumber \\
	= & -\frac{1}{4}{\rm Tr}\left\{ \tilde{R}M\boldsymbol{\mathcal{P}}_{a}\left(\left[\hat{R}(\tilde{b}_{\Omega}-W_{\Omega})\right]_{\times}\right)\right\} \nonumber \\
	= & \frac{1}{2}\mathbf{vex}\left(\boldsymbol{\mathcal{P}}_{a}(\tilde{R}M)\right)^{\top}\hat{R}(\tilde{b}_{\Omega}-W_{\Omega})\label{eq:SLAM_RI_VM_dot_Final}
	\end{align}
	To this end, from \eqref{eq:SLAM_e_dot_Final} and \eqref{eq:SLAM_RI_VM_dot_Final},
	the transformed error dynamics can be represented as
	\begin{equation}
	\begin{cases}
	\dot{E}_{\tilde{R}} & =\Lambda_{\tilde{R}}\left(\frac{1}{2}\boldsymbol{\Upsilon}(\tilde{R}M)^{\top}\hat{R}(\tilde{b}_{\Omega}-W_{\Omega})-\mu_{\tilde{R}}e_{\tilde{R}}\right)\\
	\dot{E}_{i} & =\Lambda_{i}\left(\dot{\hat{{\rm p}}}_{i}-[\begin{array}{cc}
	-\hat{R}\left[y_{i}\right]_{\times} & \hat{R}\end{array}](\tilde{b}_{U}-W_{U})-\mu_{i}e_{i}\right)
	\end{cases}\label{eq:SLAM_E_dot_Final}
	\end{equation}
	Note that $\mu_{\tilde{R}}$ and $\mu_{i}$ are vanishing components,
	and therefore, $\mu_{\tilde{R}},\mu_{i}\rightarrow0$ as $t\rightarrow\infty$.
	Define the following candidate Lyapunov function $V=V(E_{\tilde{R}},E_{1},E_{2},\ldots,E_{n},\tilde{b}_{U})$
	\begin{equation}
	V=E_{\tilde{R}}+\sum_{i=1}^{n}\frac{1}{4\alpha_{i}}\left\Vert E_{i}\right\Vert ^{2}+\frac{1}{2}\tilde{b}_{U}^{\top}\Gamma^{-1}\tilde{b}_{U}\label{eq:SLAM_Lyap1}
	\end{equation}
	From \eqref{eq:SLAM_E_dot_Final} and \eqref{eq:SLAM_Lyap1}, the
	time derivative of $V$ is
	\begin{align}
	& \dot{V}=\dot{E}_{\tilde{R}}+\sum_{i=1}^{n}\frac{1}{\alpha_{i}}E_{i}^{\top}\dot{E}_{i}-\tilde{b}_{U}^{\top}\Gamma^{-1}\dot{\hat{b}}_{U}\nonumber \\
	& =\sum_{i=1}^{n}\frac{1}{\alpha_{i}}\left[\begin{array}{c}
	\Lambda_{\tilde{R}}\boldsymbol{\Upsilon}(\tilde{R}M)\\
	\Lambda_{i}E_{i}
	\end{array}\right]^{\top}\left[\begin{array}{cc}
	\frac{\alpha_{i}}{2}\hat{R} & 0_{3\times3}\\
	\hat{R}[y_{i}]_{\times} & -\hat{R}
	\end{array}\right](\tilde{b}_{U}-W_{U})\nonumber \\
	& \hspace{0.5em}+\sum_{i=1}^{n}\frac{1}{\alpha_{i}}E_{i}^{\top}\Lambda_{i}\left(\dot{\hat{{\rm p}}}_{i}-\mu_{i}e_{i}\right)-\Lambda_{\tilde{R}}\mu_{\tilde{R}}e_{\tilde{R}}-\tilde{b}_{U}^{\top}\Gamma^{-1}\dot{\hat{b}}_{U}\label{eq:SLAM_Lyap2_dot2}
	\end{align}
	By \eqref{eq:SLAM_e_Trans} and \eqref{eq:SLAM_trans2} $|e_{i,k}|\leq\mu_{i,k}\bar{\delta}_{i,k}\xi_{i,k}|E_{i,k}|$,
	and moreover, $\mu_{i,k}$ is a vanishing component. Replacing $W_{U}$,
	$\dot{\hat{b}}_{U}$, and $\dot{\hat{{\rm p}}}_{i}$ with their definition
	in \eqref{eq:SLAM_Filter}, one has
	\begin{align}
	\dot{V}= & -\sum_{i=1}^{n}\frac{k_{1}}{\alpha_{i}}\left\Vert \Lambda_{i}E_{i}\right\Vert ^{2}-\frac{k_{w}\Lambda_{\tilde{R}}^{2}}{2\tau_{R}}\left\Vert \boldsymbol{\Upsilon}(\tilde{R}M)\right\Vert ^{2}\nonumber \\
	& -k_{2}\left\Vert \sum_{i=1}^{n}\frac{1}{\alpha_{i}}\Lambda_{i}E_{i}\right\Vert ^{2}\label{eq:SLAM_Lyap2_dot4}
	\end{align}
	In view of the result of Lemma \ref{Lemm:SLAM_Lemma1}, $2||\boldsymbol{\Upsilon}(\tilde{R}M)||^{2}/\tau_{R}\geq e_{\tilde{R}}$.
	Also, from \eqref{eq:SLAM_Smooth}, $e_{\tilde{R}}=\xi_{\tilde{R}}\frac{\bar{\delta}_{\tilde{R}}\exp(E_{\tilde{R}})-\underline{\delta}_{\tilde{R}}\exp(-E_{\tilde{R}})}{\exp(E_{\tilde{R}})+\exp(-E_{\tilde{R}})}$.
	Hence, the expression in \eqref{eq:SLAM_Lyap2_dot4} becomes
	\begin{align}
	\dot{V}\leq & -\frac{k_{w}}{8}\Lambda_{\tilde{R}}^{2}\xi_{\tilde{R}}\frac{\bar{\delta}_{\tilde{R}}\exp(E_{\tilde{R}})-\underline{\delta}_{\tilde{R}}\exp(-E_{\tilde{R}})}{\exp(E_{\tilde{R}})+\exp(-E_{\tilde{R}})}\nonumber \\
	& -\sum_{i=1}^{n}\frac{k_{1}}{\alpha_{i}}\left\Vert \Lambda_{i}E_{i}\right\Vert ^{2}-k_{2}\left\Vert \sum_{i=1}^{n}\frac{1}{\alpha_{i}}\Lambda_{i}E_{i}\right\Vert ^{2}\label{eq:SLAM_Lyap2_final}
	\end{align}
	From \eqref{eq:SLAM_Lyap2_final}, given that $E_{\tilde{R}}\left(0\right)\ensuremath{\in\mathcal{L}_{\infty}}$,
	$E_{i}\left(0\right)\ensuremath{\in\mathcal{L}_{\infty}}$, and $\tilde{R}\left(0\right)\notin\mathcal{U}_{s}$,
	$\dot{V}<0$ for all $E_{i}\neq\underline{\mathbf{0}}_{3}$ or $E_{\tilde{R}}\neq0$,
	and $\dot{V}=0$ only at $E_{i}=\underline{\mathbf{0}}_{3}$ and $E_{\tilde{R}}=0$
	for all $i=1,2,\ldots,n$. Accordingly, the inequality in \eqref{eq:SLAM_Lyap2_final}
	ensures that $E_{i}$ and $E_{\tilde{R}}$ asymptotically converge
	to the set $\mathcal{S}$ in \eqref{eq:SLAM_Set2}. Also, $\dot{V}$
	is negative, continuous, and $\dot{V}\rightarrow0$ such that $V\in\mathcal{L}_{\infty}$
	and a finite $\lim_{t\rightarrow\infty}V$ exists. According to (iii)
	in Proposition \ref{Prop:SLAM_1}, $E_{i}\neq\underline{\mathbf{0}}_{3}$
	for $e_{i}\neq0$, $E_{\tilde{R}}\neq0$ for $e_{\tilde{R}}\neq0$,
	$E_{i}=\underline{\mathbf{0}}_{3}$ only at $e_{i}=0$, while $E_{\tilde{R}}=0$
	only at $e_{\tilde{R}}=0$, respectively. Thus, $e_{i}\rightarrow0$
	and $e_{\tilde{R}}=||\tilde{R}M||_{{\rm I}}\rightarrow0$ as $t\rightarrow\infty$.
	Note that $e_{\tilde{R}}\rightarrow0$ implies that $\tilde{R}\rightarrow\mathbf{I}_{3}$.
	By \eqref{eq:SLAM_lemm1_2} in Lemma \ref{Lemm:SLAM_Lemma1}, $e_{\tilde{R}}\rightarrow0$
	strictly indicates that $\boldsymbol{\Upsilon}(\tilde{R}M)\rightarrow\underline{\mathbf{0}}_{3}$.
	Thus, from \eqref{eq:SLAM_Filter}, $W_{U}\rightarrow0$ as $E_{i}\rightarrow\underline{\mathbf{0}}_{3}$
	and $\boldsymbol{\Upsilon}(\tilde{R}M)\rightarrow\underline{\mathbf{0}}_{3}$.
	Recall $\tilde{b}_{U}$ and $\dot{\hat{b}}_{U}$ in \eqref{eq:SLAM_b_error}
	and \eqref{eq:SLAM_Filter}. $\dot{\tilde{b}}_{U}=-\dot{\hat{b}}_{U}$
	implies that $\dot{\tilde{b}}_{U}\rightarrow0$ as $E_{i}\rightarrow\underline{\mathbf{0}}_{3}$
	and $\boldsymbol{\Upsilon}(\tilde{R}M)\rightarrow\underline{\mathbf{0}}_{3}$.
	As such, $\tilde{b}_{U}$ is bounded for all $t\geq0$. In addition,
	from \eqref{eq:SLAM_Filter}, $\dot{{\rm \hat{p}}}_{i}\rightarrow0$
	as $E_{i}\rightarrow0$ and $W_{U}\rightarrow0$. In accordance with
	the aforementioned discussion and based on the fact that $\lim_{t\rightarrow\infty}\dot{e}_{i}=0$,
	one obtains
	\[
	\lim_{t\rightarrow\infty}\dot{e}_{i}=\lim_{t\rightarrow\infty}-[\begin{array}{cc}
	-\hat{R}\left[y_{i}\right]_{\times} & \hat{R}\end{array}]\tilde{b}_{U}=0
	\]
\begin{algorithm}
	\caption{\label{alg:Alg1}Systematic convergence nonlinear observer for SLAM}
	
	\textbf{Initialization}:
	\begin{enumerate}
		\item[{\footnotesize{}1:}] Set $\hat{R}\left(0\right)\in\mathbb{SO}\left(3\right)$ and $\hat{P}\left(0\right)\in\mathbb{R}^{3}$.
		As an alternative, construct $\hat{R}\left(0\right)\in\mathbb{SO}\left(3\right)$
		using one of the attitude determination methods in \cite{hashim2020AtiitudeSurvey}\vspace{1mm}
		\item[{\footnotesize{}2:}] Set ${\rm \hat{p}}_{i}\left(0\right)\in\mathbb{R}^{3}$ for all $i=1,2,\ldots,n$\vspace{1mm}
		\item[{\footnotesize{}3:}] Set $\hat{b}_{\Omega}\left(0\right)=\hat{b}_{V}\left(0\right)=0_{3\times1}$
		\vspace{1mm}
		\item[{\footnotesize{}4:}] Select $\xi_{\star}^{0}$, $\xi_{\star}^{\infty}$, $\ell_{\star}$,
		$\bar{\delta}_{\star}=\underline{\delta}_{\star}$, $k_{w}$, $k_{1}$,
		$k_{2}$, $\Gamma$, and $\alpha_{i}$ as positive constants
	\end{enumerate}
	\textbf{while$\hspace{6em}$$\forall i=1,2,\ldots,n,\text{ and }j=1,2,\ldots n_{{\rm R}}$}
	\begin{enumerate}
		\item[{\footnotesize{}5:}] Recall measurements and observations in \eqref{eq:SLAM_Vect_R} and
		their normalization $\upsilon_{j}^{r}=\frac{r_{j}}{\left\Vert r_{j}\right\Vert },\upsilon_{j}^{a}=\frac{a_{j}}{\left\Vert a_{j}\right\Vert }$
		in \eqref{eq:SLAM_Vector_norm}\vspace{1mm}
		\item[{\footnotesize{}6:}] $\hat{\upsilon}_{j}^{a}=\hat{R}^{\top}\upsilon_{j}^{r}$ as in \eqref{eq:SLAM_vect_R_estimate}\vspace{1mm}
		\item[{\footnotesize{}7:}] $M=\sum_{j=1}^{n_{{\rm R}}}s_{j}\upsilon_{j}^{r}\left(\upsilon_{j}^{r}\right)^{\top}$
		as in \eqref{eq:SLAM_M} with $\breve{\mathbf{M}}={\rm Tr}\left\{ M\right\} \mathbf{I}_{3}-M$\vspace{1mm}
		\item[{\footnotesize{}8:}] $\boldsymbol{\Upsilon}=\hat{R}\sum_{j=1}^{n_{{\rm R}}}\left(\frac{s_{j}}{2}\hat{\upsilon}_{j}^{a}\times\upsilon_{j}^{a}\right)$
		as in \eqref{eq:SLAM_VEX_VM}\vspace{1mm}
		\item[{\footnotesize{}9:}] $\pi={\rm Tr}\left\{ \left(\sum_{j=1}^{n_{{\rm R}}}s_{j}\upsilon_{j}^{a}\left(\upsilon_{j}^{r}\right)^{\top}\right)\left(\sum_{j=1}^{n_{{\rm R}}}s_{j}\hat{\upsilon}_{j}^{a}\left(\upsilon_{j}^{r}\right)^{\top}\right)^{-1}\right\} $
		as in \eqref{eq:SLAM_Gamma_VM}\vspace{1mm}
		\item[{\footnotesize{}10:}] $e_{i}=\hat{{\rm p}}_{i}-\hat{R}y_{i}-\hat{P}$ as in \eqref{eq:SLAM_e_Final}\vspace{1mm}
		\item[{\footnotesize{}11:}] $\xi_{\star}=\left(\xi_{\star}^{0}-\xi_{\star}^{\infty}\right)\exp(-\ell_{\star}t)+\xi_{\star}^{\infty}$
		as in \eqref{eq:SLAM_Presc} and $\Lambda_{\star}$ as in \eqref{eq:SLAM_Aux1}\vspace{1mm}
		\item[{\footnotesize{}12:}] $E_{\star}=\frac{1}{2}\text{ln}\frac{\bar{\delta}_{\star}+e_{\star}/\xi_{\star}}{\underline{\delta}_{\star}-e_{\star}/\xi_{\star}}$
		as in \eqref{eq:SLAM_trans2}\vspace{1mm}
		\item[{\footnotesize{}13:}] $W_{\Omega}=\frac{k_{w}\Lambda_{\tilde{R}}-4\mu_{\tilde{R}}}{\tau_{R}}\hat{R}^{\top}\boldsymbol{\Upsilon}$,
		with $\tau_{R}=\underline{\lambda}(\breve{\mathbf{M}})\times(1+\pi[k])$\vspace{1mm}
		\item[{\footnotesize{}14:}] $W_{V}=-\sum_{i=1}^{n}\frac{k_{2}}{\alpha_{i}}\hat{R}^{\top}\Lambda_{i}E_{i}$\vspace{1mm}
		\item[{\footnotesize{}15:}] $\dot{\hat{R}}=\hat{R}\left[\Omega_{m}-\hat{b}_{\Omega}-W_{\Omega}\right]_{\times}$\vspace{1mm}
		\item[{\footnotesize{}16:}] $\dot{\hat{P}}=\hat{R}(V_{m}-\hat{b}_{V}-W_{V})$\vspace{1mm}
		\item[{\footnotesize{}17:}] $\dot{\hat{b}}_{\Omega}=\frac{\Gamma_{1}}{2}\hat{R}^{\top}\Lambda_{\tilde{R}}\boldsymbol{\Upsilon}-\sum_{i=1}^{n}\frac{\Gamma_{2}}{\alpha_{i}}\left[y_{i}\right]_{\times}\hat{R}^{\top}\Lambda_{i}E_{i}$\vspace{1mm}
		\item[{\footnotesize{}18:}] $\dot{\hat{b}}_{V}=-\sum_{i=1}^{n}\frac{\Gamma_{2}}{\alpha_{i}}\hat{R}^{\top}\Lambda_{i}E_{i}$\vspace{1mm}
		\item[{\footnotesize{}19:}] $\dot{{\rm \hat{p}}}_{i}=-k_{1}\Lambda_{i}E_{i}+\hat{R}\left[y_{i}\right]_{\times}W_{\Omega}$
	\end{enumerate}
	\textbf{end while}
\end{algorithm}	
	Consistently with Assumption \ref{Assumption:Feature}, define 
	\[
	Q=-\left[\begin{array}{cc}
	-\hat{R}[y_{1}]_{\times} & \hat{R}\\
	\vdots & \vdots\\
	-\hat{R}[y_{n}]_{\times} & \hat{R}
	\end{array}\right]\in\mathbb{R}^{3n\times6},\hspace{1em}n\geq3
	\]
	Hence, $Q$ is full column rank such that $\lim_{t\rightarrow\infty}Q\tilde{b}_{U}=0$
	results in $\lim_{t\rightarrow\infty}\tilde{b}_{U}=0$. Therefore,
	$\ddot{V}$ is bounded. On the grounds of Barbalat Lemma, $\dot{V}$
	is uniformly continuous. According to the fact that $\tilde{b}_{U}\rightarrow0$
	and $W_{U}\rightarrow0$ as $t\rightarrow\infty$, $\dot{\tilde{\boldsymbol{T}}}\rightarrow0$,
	and in turn, $\tilde{\boldsymbol{T}}\rightarrow\boldsymbol{T}_{c}(\mathbf{I}_{3},P_{c})$,
	where $\boldsymbol{T}_{c}(\mathbf{I}_{3},P_{c})\in\mathbb{SE}\left(3\right)$
	is a constant matrix with $P_{c}\in\mathbb{R}^{3}$ describing a constant
	vector. Consequently, $\lim_{t\rightarrow\infty}\tilde{P}=P_{c}$
	completing the proof.\end{proof}

The complete implementation steps are described in Algorithm \ref{alg:Alg1}.

\begin{figure}[h]
	\centering{}\includegraphics[scale=0.28]{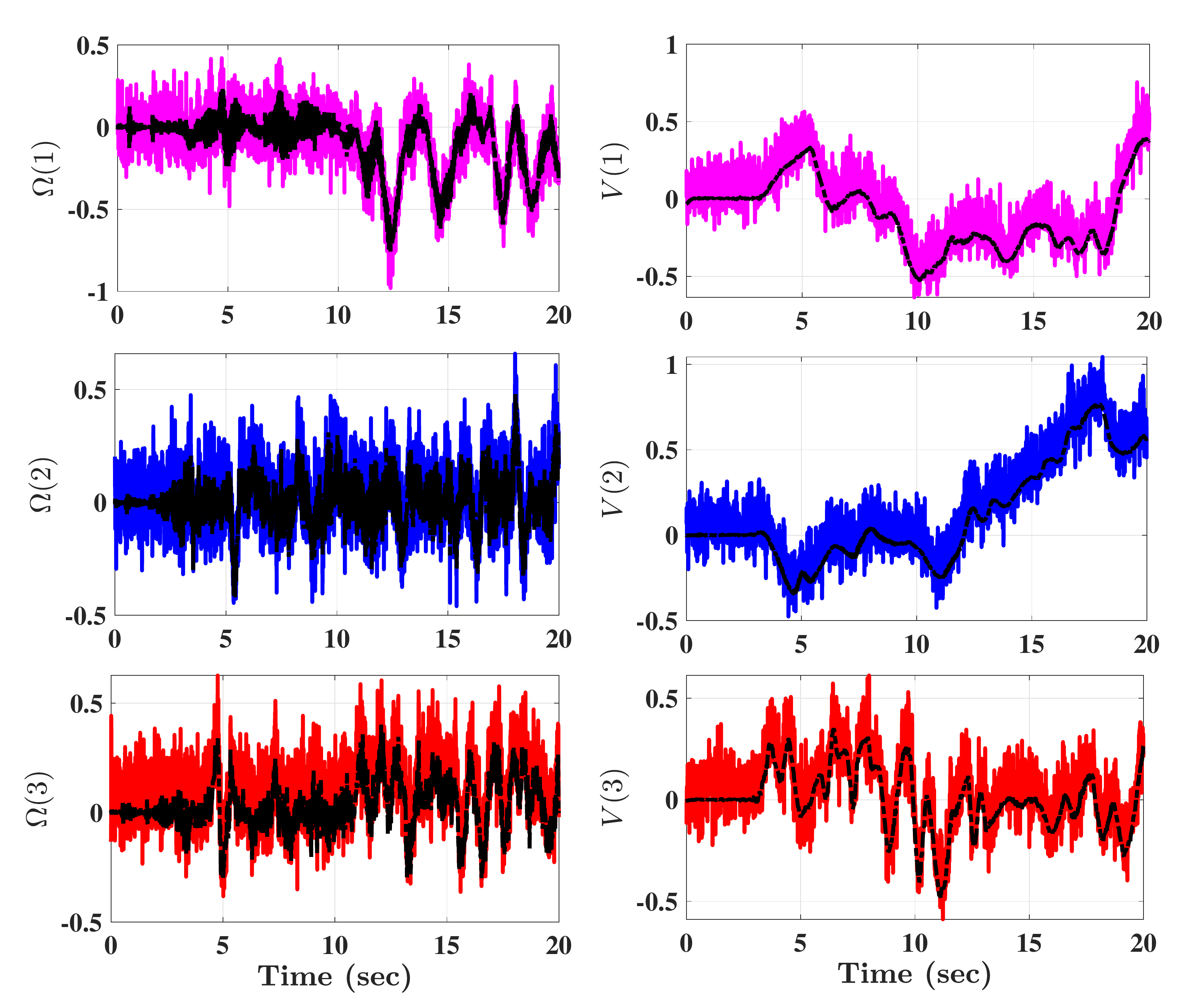}\caption{Angular and translational velocities: True vs measurements with extra
		additive noise.}
	\label{fig:SLAM_Vel}
\end{figure}

\section{Experimental Results \label{sec:SE3_Simulations}}

In this section, the performance of the proposed nonlinear SLAM observer
developed on the Lie group $\mathbb{SLAM}_{n}\left(3\right)$ and
characterized by systematic convergence is tested and evaluated. The
observing capabilities and robustness of the proposed approach against
noise are demonstrated using a real-world EuRoc dataset \cite{Burri2016Euroc}.
The dataset includes 1) ground truth data that consists of the unmanned
aerial vehicle true flight trajectory, 2) IMU measurements, and 3)
stereo images. Due to the lack of real-world landmarks in the dataset,
a set of virtual landmarks defined in the previous section is utilized.
According to the dataset, the true attitude and position of the robot
in 3D space are initialized at
\begin{align*}
P\left(0\right) & =[-1.0761,0.4925,1.3299]^{\top},\\
R\left(0\right) & =\left[\begin{array}{ccc}
-0.2645 & -0.0014 & -0.9644\\
0.0198 & 0.9998 & -0.0069\\
0.9642 & -0.0210 & -0.2645
\end{array}\right]\in\mathbb{SO}\left(3\right)
\end{align*}
Consider four inertial-frame-fixed landmarks placed at ${\rm p}_{1}=[2,0,0]^{\top}$,
${\rm p}_{2}=[-2,0,0]^{\top}$, ${\rm p}_{3}=[0,2,0]^{\top}$, and
${\rm p}_{4}=[0,-2,0]^{\top}$. The initial estimates of the robot
attitude and position in 3D are as follows: 
\begin{align*}
\hat{P}\left(0\right) & =[0,0,0]^{\top},\\
\hat{R}\left(0\right) & =\left[\begin{array}{ccc}
-0.2653 & -0.0642 & -0.9620\\
0.0032 & 0.9977 & -0.0675\\
0.9642 & -0.0210 & -0.2645
\end{array}\right]\in\mathbb{SO}\left(3\right)
\end{align*}
\begin{figure*}
	\centering{}\includegraphics[scale=0.3]{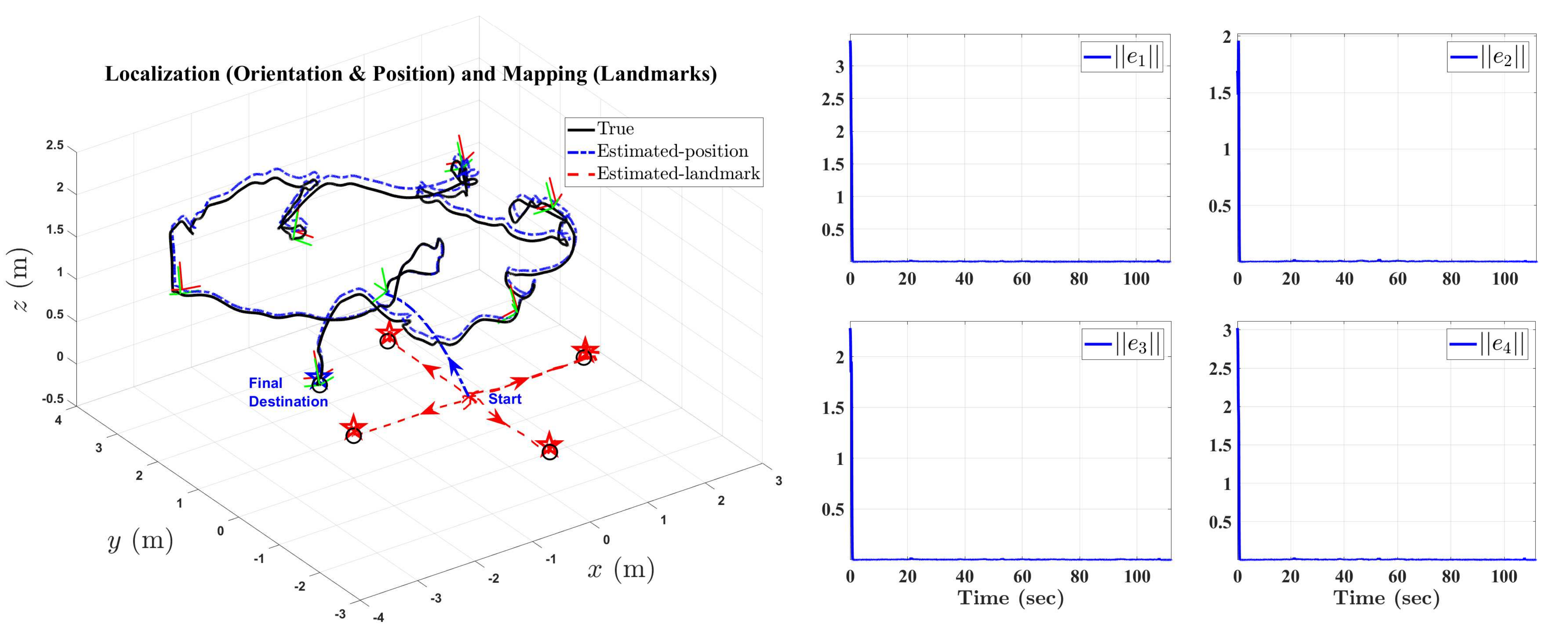}\caption{Experimental validation using dataset Vicon Room 2 01. Output performance
		of the proposed nonlinear observer for SLAM vs the true trajectory
		(on the left), and error trajectories (on the right).}
	\label{fig:SLAM_3d}
\end{figure*}
Also, suppose that the initial estimates of the four landmark positions
are $\hat{{\rm p}}_{1}\left(0\right)=\hat{{\rm p}}_{2}\left(0\right)=\hat{{\rm p}}_{3}\left(0\right)=\hat{{\rm p}}_{4}\left(0\right)=[0,0,0]^{\top}$.
Let us supplement the dataset measurement noise with additional uncertainties.
Thus, assume the group velocity vector $U_{m}$ to be corrupted with
constant bias $b_{U}=\left[b_{\Omega}^{\top},b_{V}^{\top}\right]^{\top}$
such that $b_{\Omega}=[-0.0023,0.0249,0.0816]^{\top}({\rm rad/sec})$
and $b_{V}=[-0.0209,0.1216,0.0788]^{\top}({\rm m/sec})$. Moreover,
let $U_{m}$ be contaminated with noise $n_{U}=\left[n_{\Omega}^{\top},n_{V}^{\top}\right]^{\top}$
such that $n_{\Omega}=\mathcal{N}\left(0,0.1\right)({\rm rad/sec})$
and $n_{V}=\mathcal{N}\left(0,0.1\right)({\rm m/sec})$ where $n_{\Omega}=\mathcal{N}\left(0,0.1\right)$
abbreviates randomly distributed noise with a zero mean and a standard
deviation of $0.1$. The design parameters are selected as $\xi_{\star}^{\infty}=0.03$,
$\ell_{\star}=1$, $\xi_{\star}^{0}=\bar{\delta}_{\star}=\underline{\delta}_{\star}=e_{\star}\left(0\right)+4$,
$\alpha_{i}=0.05$, $\Gamma_{1}=3\mathbf{I}_{3}$, $\Gamma_{2}=10\mathbf{I}_{3}$,
$k_{w}=5$, $k_{1}=10$, and $k_{2}=10$ for all $i=1,2,3,4$. Additionally,
set the initial bias estimate to $\hat{b}_{U}\left(0\right)=\underline{\mathbf{0}}_{6}$.

Fig. \ref{fig:SLAM_Vel} depicts the true angular and translational
velocities plotted against their measured values heavily contaminated
with noise and bias. Black center-line represents the true values,
while magenta, blue, and red stand for the measured values. The output
trajectories estimated by the proposed nonlinear observer for SLAM
are presented in Fig. \ref{fig:SLAM_3d} where they are compared to
the true trajectories of travel. In the left portion of Fig. \ref{fig:SLAM_3d},
the robot trajectory of travel is indicated using a black center-line,
while the final destinations of both the robot and the fixed landmarks
are marked with black circles. The true vehicle orientation is shown
with a green solid-line. The estimated trajectory of the robot's motion
is depicted as a blue center-line that starts at the origin and arrives
to the final destination marked with a blue star $\textcolor{blue}{\star}$. The estimated
landmark trajectories are depicted as red dash-lines and the final
landmark positions estimated by the observer appear as red stars $\textcolor{red}{\star}$.
The vehicle orientation estimation is presented as a red solid-line.
In the right portion of Fig. \ref{fig:SLAM_3d}, the error convergence
of $||e_{1}||$, $||e_{2}||$, $||e_{3}||$, and $||e_{4}||$ is plotted
as a blue solid-line.

It becomes evident that, as illustrated by the left portion of Fig.
\ref{fig:SLAM_3d}, both robot's and landmark position estimates,
initiated at the origin, successfully arrive to their true final destinations.
The right portion of Fig. \ref{fig:SLAM_3d} illustrates asymptotic
convergence of the error trajectories of $e_{i}$ for the nonlinear
observer for SLAM. As attested by Fig. \ref{fig:SLAM_3d}, the proposed
observer is able to successfully localize the unknown robot's position
while simultaneously mapping the unknown environment.%

\section{Conclusion \label{sec:SE3_Conclusion}}

In this paper, a nonlinear observer on the Lie group of $\mathbb{SLAM}_{n}\left(3\right)$
for Simultaneous Localization and Mapping (SLAM) has been proposed.
The observer is able to efficiently control the error by implementing
predefined measures of transient and steady-state performance. For
successful operation the observer requires measurements of angular
and translational velocity, landmark measurements, and an inertial
measurement unit (IMU). Systematic convergence is enabled by the transformation
of the constrained error to its unconstrained form. The proposed observer
design successfully handles unknown bias inevitably present in the
velocity measurements. Experimental results reveal the robustness
and effectiveness of the proposed nonlinear observer for concurrent
estimation of the unknown pose and mapping of the unknown landmarks.

\section*{Acknowledgment}

The authors would like to thank \textbf{Maria Shaposhnikova} for proofreading
the article.

\bibliographystyle{IEEEtran}
\bibliography{bib_SLAM}

\section*{AUTHOR INFORMATION}
\vspace{10pt}

{\bf Hashim A. Hashim} (Member, IEEE) is an Assistant Professor with the Department of Engineering and Applied Science, Thompson Rivers University, Kamloops, British Columbia, Canada. He received the B.Sc. degree in Mechatronics, Department of Mechanical Engineering from Helwan University, Cairo, Egypt, the M.Sc. in Systems and Control Engineering, Department of Systems Engineering from King Fahd University of Petroleum \& Minerals, Dhahran, Saudi Arabia, and the Ph.D. in Robotics and Control, Department of Electrical and Computer Engineering at Western University, Ontario, Canada.\\
His current research interests include stochastic and deterministic attitude and pose filters, Guidance, navigation and control, simultaneous localization and mapping, control of multi-agent systems, and optimization techniques.

\underline{Contact Information}: \href{mailto:hhashim@tru.ca}{hhashim@tru.ca}.
\vspace{50pt}

{\bf Abdelrahman E.E. Eltoukhy} received his BSc Degree in Production Engineering from Helwan University, Egypt, and obtained his MSc in Engineering and Management from the Politecnico Di Torino, Italy. He obtained his PhD degree from The Hong Kong Polytechnic University, Hong Kong. He is currently a Research Assistant Professor in Industrial and Systems Engineering department, The Hong Kong Polytechnic University, Hong Kong.\\
His current research interests include airline schedule planning, logistics and supply chain management, operations research, and simulation.

\end{document}